\renewcommand{\textbf}[1]{\begingroup\bfseries\mathversion{bold}#1\endgroup}
\newlength{\bibitemsep}\setlength{\bibitemsep}{.1\baselineskip plus
\newlength{\bibparskip}\setlength{\bibparskip}{0pt}
\let\oldthebibliography\thebibliography \renewcommand\thebibliography[1]{
  \oldthebibliography{#1} \setlength{\parskip}{\bibitemsep}
  \setlength{\itemsep}{\bibparskip} }
\newtheorem{thm}{Theorem}[section]
\newtheorem{defi}{Definition}[section]
\newtheorem{prop}[thm]{Proposition}
\newtheorem{Conjecture}[thm]{Conjecture}
\newtheorem{lemma}[thm]{Lemma}
\theoremstyle{definition}
\newtheorem{remark}[thm]{Remark}
\newtheorem{examples}[thm]{Examples}
\newcommand{\R}{\mathbb R}
\newcommand{\Z}{\mathbb Z}
\newcommand{\N}{\mathbb N}
\numberwithin{equation}{section}
\def\XXint#1#2#3{{\setbox0=\hbox{$#1{#2#3}{\int}$}
    \vcenter{\hbox{$#2#3$}}\kern-.5\wd0}}
\date{date}
\begin{document}
\title{Optimality of the triangular lattice for Lennard-Jones type lattice energies: a computer-assisted method}
\author{Laurent B\'{e}termin\\ \\
Institut Camille Jordan, Universit\'e Claude Bernard Lyon 1\\ 43 Bd du 11 novembre 1918, 69622 Villeurbanne Cedex, France\\ \texttt{betermin@math.univ-lyon1.fr}. ORCID id: 0000-0003-4070-3344 }
\date\today
\maketitle

\begin{abstract}
It is well-known that any Lennard-Jones type potential energy must have a periodic ground state given by a triangular lattice in dimension 2. In this paper, we describe a computer-assisted method that rigorously shows such global minimality result among $2$-dimensional lattices once the exponents of the potential have been fixed. The method is applied to the widely used classical $(12,6)$ Lennard-Jones potential, which is the main result of this work. Furthermore, a new bound on the inverse density (i.e. the co-volume) for which the triangular lattice is minimal is derived, improving those found in [L. B\'etermin and P. Zhang, \textit{Commun. Contemp. Math.}, 17 (2015), 1450049] and [L. B\'etermin, \textit{SIAM J. Math. Anal.}, 48 (2016), 3236--3269]. The same results are also shown to hold for other exponents as additional examples and a new conjecture implying the global optimality of a triangular lattice for any parameters is stated.
\end{abstract}

\noindent
\textbf{AMS Classification:} Primary 74G65, Secondary 82B20, 52C15.\\
\textbf{Keywords:} Lennard-Jones potential, Crystallization, Lattices, Triangular lattice, Epstein zeta function, Ground state.

%\tableofcontents

\section{Introduction and main results}

The question of existence and uniqueness of periodic ground states for interacting particles systems has been addressed in many contexts (see \cite{BlancLewin-2015}). The crucial question ``Why are solids crystalline?" asked for instance by Radin \cite{RadinLowT} in the context of low-temperature matter is still under investigation and only few rigorous answers have emerged, leaving the ``Crystallization Conjecture" largely open. Even when particles are interacting through a pairwise potential, finding theoretically or numerically the optimal configuration of points minimizing the potential energy of such system is extremely challenging due to the huge number of parameters and the presence of lots of critical points.

\medskip

We usually expect for a pairwise interaction potential between radially symmetric neutral particles at very low temperature to be repulsive at small distances (due to Pauli Principle) and attractive at large distances (due to Van der Waals forces). A prototypical example of such interaction is the classical Lennard-Jones potential
\begin{equation}\label{eq:classicLJ}
f(r)=\frac{a}{r^{12}}-\frac{b}{r^6}, \quad \textnormal{where}\quad (a,b)\in (0,\infty)^2,
\end{equation}
popularized by Lennard-Jones in \cite{LJ} to study the thermodynamical properties
of rare gases (e.g. liquid argon), but originally proposed by Mie in \cite{Mie} in the general form $f(r)=ar^{-\alpha}-br^{-\beta}$ for $\alpha>\beta$ (we will however call such potential ``Lennard-Jones type potentials"). The repulsion is therefore given by the inverse power law $a r^{-12}$ and the behavior at infinity in $-b r^{-6}$ mimics the Van der Waals attraction. This potential is widely used in Molecular Dynamics (see e.g. \cite{Kaplan,MolecSimul}). In dimension 2, particles interacting through such potential are expected to crystallize, i.e. to find their ground state, on a perfect triangular lattice (see Figure \ref{fig:trilattice} and \eqref{eq:tri}) as their number goes to infinity (see e.g. \cite{YBLB} and \cite[Fig. 1]{BlancLewin-2015}). In dimension 3, the same is expected to happen on a Hexagonal Close Packing structure (see e.g. \cite{ModifMorse,Beterminlocal3d}). The rare attempts to show the two-dimensional case have lead to optimality results for the triangular lattice in the case of hard-sphere potentials \cite{Rad2,Luca:2016aa} and approximations of them \cite{Rad3,Crystal,AuyeungFrieseckeSchmidt-2012}. Notice also that crystallization results for long-range pairwise potentials have only been proven in \cite{Crystal} (on a triangular lattice) and recently in \cite{BDLPSquare} (on a square lattice) where the potentials are strongly repulsive at the origin and have very narrow wells (for the triangular lattice, in order to catch only the nearest-neighbors) or sufficiently large well one (for the square lattice, in order to also catch the next-nearest neighbors) before converging rapidly to zero.

\begin{figure}[!h]
\centering
\includegraphics[width=5cm]{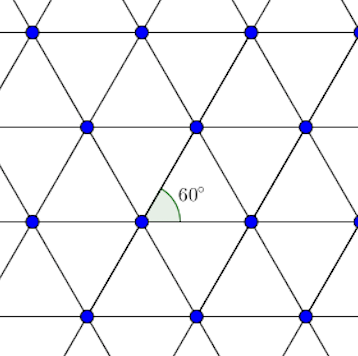}
\caption{Patch of the triangular lattice of unit density $\mathsf{A}_2\subset \R^2$ defined by \eqref{eq:tri} where all primitive triangles are equilateral with side length $\sqrt{2/\sqrt{3}}$.}
\label{fig:trilattice}
\end{figure}

\medskip

When crystallization is assumed, finding the most energetically favorable periodic lattice structure (see Definition \ref{def:lattices}) is an interesting and much simpler problem due to the small number of parameters in action. However, the apparent simplicity of the problem hides a lot of technical issues, even in dimension 2. Indeed, the usual way to minimize energies of type 
\begin{equation}\label{def:EfL2}
E_f[L]:=\sideset{}{'}\sum_{p\in L} f(|p|),\quad L\in \mathcal{L}_2:=\{\Z u \oplus \Z v\subset \R^2 : (u,v) \textnormal{ is a basis of $\R^2$}\},
\end{equation}
where $|\cdot|$ is the Euclidean norm on $\R^2$ and $\sideset{}{'}\sum$ denotes the sum without the $p=0$ term, among lattices $L\in \mathcal{L}_2$ is to parametrize the lattices with 3 real numbers and to optimize this 3-variable function by computing its derivatives. When $f(r)=r^{-s}$ is an inverse power-law, $E_f$ is called the Epstein zeta function defined, for $s>2$ and $L\in \mathcal{L}_2$, by
\begin{equation}\label{intro:zeta}
\zeta_L(s):=\sideset{}{'}\sum_{p\in L}\frac{1}{|p|^s},
\end{equation}
 and has been studied by Rankin \cite{Rankin}, Ennola \cite{Eno2}, Cassels \cite{Cassels} and Diananda \cite{Diananda} (see also \cite{Henn} for a review of these results). They have shown, for all fixed $s>2$, the minimality of the triangular lattice  $\mathsf{A}_2\subset \R^2$ defined by
\begin{equation}\label{eq:tri}
\mathsf{A}_2:=\sqrt{\frac{2}{\sqrt{3}}}\left[\Z (1,0)\oplus \Z\left( \frac{1}{2},\frac{\sqrt{3}}{2}\right)  \right]
\end{equation}
among two-dimensional lattices of fixed unit density. This result have been generalized to all $f(r)=F(r^2)$, where $F$ is a completely monotone function (i.e. the Laplace transform of a nonnegative Radon measure), by Montgomery in \cite{Mont} where he showed the optimality of the triangular lattice for any Gaussian function $f(r)=e^{- \alpha r^2}$, $\alpha>0$ (this is also called ``universal optimality among lattices"). Concerning the Lennard-Jones type potentials $f(r)=a r^{-\alpha}- b r^{-\beta}$, several results have been derived in \cite{Betermin:2014fy,BetTheta15,Beterloc,OptinonCM,LBbonds21}, especially the optimality of the triangular lattice among lattices with fixed high density as well as the minimality of a triangular lattice for $E_f$, with small exponents, among all possible two-dimensional lattices. We have also shown in \cite{OptinonCM} that the type (or ``shape") of the global minimizer of $E_f$ is independent of $(a,b) \in (0,\infty)^2$ (see also Theorem \ref{thm:LJ}). Unfortunately, the method of \cite{Betermin:2014fy,BetTheta15} based on Montgomery's result and a Riemann's splitting of $E_f$ was not adapted to show that a triangular lattice is the minimizer of $E_f$ among all lattices when $f$ is the classical Lennard-Jones potential \eqref{eq:classicLJ}, but only for certain pairs of small exponents where $\alpha\leq 8$ (see \cite[Thm. 1.2.B.2 and Rmk. 6.18]{BetTheta15} for more details).

\medskip

The present work aims to correct this issue by deriving a rigorous computer-assisted method designed to show that, given the exponents $\alpha>\beta>2$, the minimizer of $E_f$ is a triangular lattice. This general optimality result has been conjectured in  \cite{Betermin:2014fy,BetTheta15,Beterloc,OptinonCM} and numerically checked several times (see also \cite{SamajTravenecLJ}). Even though we have checked our proof for many different pairs of exponents (see Section \ref{sec:otherexponents} and in particular Equation \eqref{eq:exp}), we state our main theorem for the classical Lennard-Jones potential \eqref{eq:classicLJ} since this is the most popular  and interesting one among them in mathematical physics and molecular dynamics. 

\begin{thm}[\textbf{Optimality of a triangular lattice for the classical Lennard-Jones energy}]\label{mainthm}
For any $(a,b)\in (0,\infty)^2$, the triangular lattice $\sqrt{V_{\mathsf{A_2}}}\mathsf{A}_2$ of co-volume $V_{\mathsf{A_2}}=\left( \frac{2 a  \zeta_{\mathsf{A_2}}(12) }{b  \zeta_{\mathsf{A_2}}(6)}\right)^{\frac{1}{3}}$ is the unique minimizer in $\mathcal{L}_2$ defined by \eqref{def:EfL2}, up to rotation, of 
\begin{equation}\label{def:Ef6-12}
L\mapsto E_f[L]:= \sideset{}{'} \sum_{p\in  L}\left[ \frac{a}{|p|^{12}}- \frac{b}{|p|^6} \right]=a\zeta_L(12)-b\zeta_L(6),
\end{equation}
and where the Epstein zeta function $\zeta_L(s)$ is defined by \eqref{intro:zeta}.
\end{thm}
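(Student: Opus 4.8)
The plan is to reduce the three-parameter minimization to a scale-invariant optimization of the \emph{shape} of the lattice, and then to settle the latter by a rigorous numerical scheme. Writing an arbitrary lattice as $L=\lambda L_0$ with $L_0$ of unit density and $\lambda>0$, the homogeneity $\zeta_{\lambda L_0}(s)=\lambda^{-s}\zeta_{L_0}(s)$ gives
\[
E_f[\lambda L_0]=a\lambda^{-12}\zeta_{L_0}(12)-b\lambda^{-6}\zeta_{L_0}(6).
\]
For fixed $L_0$ this is a quadratic in $u=\lambda^{-6}>0$ with positive leading coefficient, minimized at $u^\star=\tfrac{b\zeta_{L_0}(6)}{2a\zeta_{L_0}(12)}$ with minimal value $-\tfrac{b^2}{4a}\,\tfrac{\zeta_{L_0}(6)^2}{\zeta_{L_0}(12)}$. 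Since the prefactor $-b^2/(4a)$ is negative for every $(a,b)\in(0,\infty)^2$, minimizing $E_f$ over $\mathcal{L}_2$ is, simultaneously for all admissible $(a,b)$, equivalent to maximizing the scale-invariant functional
\[
G(L_0):=\frac{\zeta_{L_0}(6)^2}{\zeta_{L_0}(12)}
\]
over lattices of unit density; moreover the optimal scale $\lambda$ attached to the maximizer reproduces exactly the announced co-volume $V_{\mathsf{A}_2}$, and uniqueness transfers since the optimal $\lambda$ is unique for each shape. Hence the theorem is equivalent to the claim that $G$ is maximized, uniquely up to rotation, by $\mathsf{A}_2$.

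Next I would identify the shape space with the standard modular fundamental domain $\mathcal{D}=\{z=x+iy:\ y>0,\ |x|\le\tfrac12,\ |z|\ge1\}$, the lattice attached to $z$ being $L_z=y^{-1/2}\big(\Z(1,0)\oplus\Z(x,y)\big)$, so that
\[
\zeta_{L_z}(s)=y^{s/2}\!\!\sum_{(m,n)\neq(0,0)}\!\!\big((m+nx)^2+n^2y^2\big)^{-s/2},
\]
and $\mathsf{A}_2$ corresponds to the corner $z_0=\tfrac12+i\tfrac{\sqrt3}{2}$. The reflection $x\mapsto-x$ lets me restrict to $0\le x\le\tfrac12$. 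The key conceptual point is that the classical Rankin--Ennola--Cassels--Diananda minimality $\zeta_{L_z}(s)\ge\zeta_{\mathsf{A}_2}(s)$, valid for each fixed $s>2$, \emph{cannot} be invoked directly here: it lowers both the numerator and the denominator of $G$ and therefore says nothing about the ratio. One must instead establish the sharp two-dimensional inequality $G(L_z)\le G(\mathsf{A}_2)$, with equality if and only if $z=z_0$.

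The proof of this inequality is where the computer-assisted scheme enters, split according to the geometry of $\mathcal{D}$. Near the cusp ($y\to\infty$) the lattice degenerates into widely separated collinear rows; isolating the divergent contribution $2\zeta(s)y^{s/2}$ of the terms with $n=0$ yields $G(L_z)\to 2\zeta(6)^2/\zeta(12)$, and I would upgrade this to an effective bound certifying $G(L_z)<G(\mathsf{A}_2)$ for all $y\ge Y_0$, using an explicit estimate for the rows with $n\neq0$. On the remaining compact set $\mathcal{D}\cap\{y\le Y_0,\ 0\le x\le\tfrac12\}$ I would run a verified subdivision: cover it by finitely many boxes and, on each, produce rigorous interval enclosures of $\zeta_{L_z}(6)$ and $\zeta_{L_z}(12)$ by truncating the double series and bounding the tail by a monotone comparison with a two-dimensional integral. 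Away from $z_0$ this certifies the strict inequality $G<G(\mathsf{A}_2)$ box by box, while on the finitely many boxes meeting a fixed neighbourhood of $z_0$ I would instead certify, via interval enclosures of the Hessian of $\log G$, that $z_0$ is the unique critical point there and a strict local maximum, the vanishing of the gradient being forced by the additional symmetry of the hexagonal lattice.

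The main obstacle is precisely this last, quantitative layer: producing interval enclosures of the Epstein zeta values and of their first and second derivatives that are at once rigorous, with a tail bound valid uniformly over each box, and tight enough that the strict inequality $G<G(\mathsf{A}_2)$ survives the interval widening on every box of a computationally feasible partition. The difficulty concentrates at the maximizer $z_0$, where $G-G(\mathsf{A}_2)$ vanishes to second order, so no subdivision produces a strict sign by mere evaluation; this is what forces the separate local Hessian analysis and the careful matching of the neighbourhood of $z_0$ with the global box cover --- the delicate bookkeeping on which the rigor of the method ultimately rests.
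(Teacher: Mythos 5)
Your proposal is correct in substance and its first reduction coincides with the paper's (Theorem \ref{thm:LJ}): optimizing out the scale $\lambda$, with the quadratic-in-$u=\lambda^{-6}$ trick valid here because $\alpha=2\beta$, and recovering the stated co-volume $V_{\mathsf{A}_2}$. Where you genuinely diverge is in the shape optimization. You maximize the ratio $G(L)=\zeta_L(6)^2/\zeta_L(12)$ on unit-density lattices, correctly observing that Rankin--Ennola--Cassels--Diananda minimality cannot be applied to a ratio, and your cusp limit $2\zeta(6)^2/\zeta(12)\approx 2.07<G(\mathsf{A}_2)\approx 6.76$ is right. The paper instead passes to the difference quotient $\mathcal{Q}_{12,6}(L)=\bigl(\zeta_L(12)-\zeta_{\mathsf{A}_2}(12)\bigr)/\bigl(\zeta_L(6)-\zeta_{\mathsf{A}_2}(6)\bigr)$ and proves $\mathcal{Q}_{12,6}>2$ on the fundamental domain. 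This choice buys two things your route does not. First, by the Coulangeon--Sch\"urmann strict local minimality of $\mathsf{A}_2$ for each $\zeta_L(s)$, $\mathcal{Q}_{12,6}$ extends to a $C^1$ function on $\mathcal{D}$ whose value at the corner $(1/2,\sqrt{3}/2)$ stays strictly above the threshold, so a \emph{single} Lipschitz-constant grid check (with explicit $M_{12,6}=181$, $\delta=10^{-2}$, $\overline{y_{12,6}}=7.52$ from Lemma \ref{lem1}) certifies the inequality uniformly, including arbitrarily close to $\mathsf{A}_2$; in your formulation $G-G(\mathsf{A}_2)$ vanishes to second order at $z_0$, so no box evaluation can ever certify strictness there, and you correctly pay for this with an extra verified-Hessian layer near $z_0$ (which, note, needs a strict inequality, namely negative definiteness of $2\nabla^2\zeta_6/\zeta_6-\nabla^2\zeta_{12}/\zeta_{12}$ at $\mathsf{A}_2$, that must itself be established by rigorous enclosure of second derivatives over a whole neighbourhood --- feasible but considerably more delicate than the paper's single Lipschitz bound). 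Second, the inequality $\mathcal{Q}_{\alpha,\beta}>\alpha/\beta$ is strictly stronger than maximality of $G$: it yields Proposition \ref{prop:bound}, optimality of $\sqrt{V}\mathsf{A}_2$ in $\mathcal{L}_2(V)$ for \emph{every} co-volume $V\leq(2a/b)^{1/3}$, of which Theorem \ref{mainthm} is then a corollary, whereas your argument delivers only the global statement at the optimal scale. In short: your route is a valid, self-contained alternative (it even bypasses Coulangeon--Sch\"urmann as an input, re-proving the needed local nondegeneracy numerically), at the cost of a harder degenerate-point certification and a weaker final output; the paper's difference-quotient functional is precisely engineered so that the quantity being checked is bounded away from its threshold everywhere, which is what makes the grid-plus-Lipschitz verification uniform and short.
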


The proof's strategy is precisely explained in Section \ref{sec:strategy} but we briefly sketch it here. This is a computer-assisted proof since it needs a computer to check a finite number of values in order to conclude. The method is also designed to work for a general pair of exponents $(\alpha,\beta)$ replacing the parameters $(12,6)$ in Theorem \ref{mainthm}, i.e. when $f$ is a Lennard-Jones type potential $f(r)=a r^{-\alpha}- b r^{-\beta}$. First, our minimization problem among all lattices is transformed into the new problem $\mathcal{Q}_{\alpha,\beta}(L)>\alpha/\beta$, where, for all $L\neq \mathsf{A}_2$,
$$
\mathcal{Q}_{\alpha,\beta}(L):=\frac{\zeta_L(\alpha)-\zeta_{\mathsf{A}_2}(\alpha)}{\zeta_L(\beta)-\zeta_{\mathsf{A}_2}(\beta)},
$$
among unit density lattices. We therefore restrict our problem to a compact set $\overline{\mathcal{K}_{\alpha,\beta}}$ (see \eqref{eq:Kbar} and Figure \ref{fig:ex}) of unit density lattices, since the above inequality appears to be true outside of it. This is hence possible to design a square grid of points $G_{\alpha,\beta}^\delta\subset\overline{\mathcal{K}_{\alpha,\beta}}$ (see Figure \ref{fig:ex}) with appropriate small size $\delta$ such that the minimum value of $\mathcal{Q}_{\alpha,\beta}$ in $G_{\alpha,\beta}^\delta$ gives a sufficiently precise value of the minimum of $\mathcal{Q}_{\alpha,\beta}$ in $\overline{\mathcal{K}_{\alpha,\beta}}$. Finally, if $\mathcal{Q}_{\alpha,\beta}(L)>\frac{\alpha}{\beta}$ among these values with the appropriate degree of accuracy, then Theorem \ref{mainthm} is proved for the pair of exponents $(\alpha,\beta)$. This is in particular true for $(\alpha,\beta)=(12,6)$ as precisely stated in Theorem \ref{mainthm} (see Section \ref{sec:126}) and for other exponents as shown in Section \ref{sec:otherexponents}.

\medskip

This type of computer-assisted proof applied to lattice energy minimization problems has been also used by Sarnak and Strombergsson in \cite{SarStromb} for showing that the FCC lattice is the minimizer of the three-dimensional height among flat tori (i.e. basically the derivative of the Epstein zeta function at $s=0$). However, our problem concerning $\mathcal{Q}_{\alpha,\beta}$ is really less complicated than the latter since we are in dimension 2 (i.e. the computational time is rather short) and we do not need extremely precise estimates to conclude (the inequality  $\mathcal{Q}_{\alpha,\beta}(L)>\frac{\alpha}{\beta}$ is actually far from being sharp, see Sections \ref{sec:126} and \ref{sec:otherexponents}). For these reasons, we do not provide any code in this paper since we only need to compute a good approximation of Epstein zeta functions (i.e. with enough terms) as well as enough values on the grid $G_{\alpha,\beta}^\delta$ for showing our result, which are rather simple tasks. Also, our method can be applied to any chosen parameters $(\alpha,\beta)$ (see Section \ref{sec:otherexponents}) as it is expected that the minimality of the triangular lattice holds for all exponents (see e.g. \cite{OptinonCM,SamajTravenecLJ}). However, our proof is specific to Lennard-Jones type potentials since it strongly relies on homogeneity of the Epstein zeta function (see Lemma \ref{lem:scale}) as well as on the fact that $f$ is a one-well potential. Therefore, we do not expect our method to be adaptable to other types of potentials. Furthermore, since no optimality result is available for the three-dimensional Epstein zeta function (for which the Face-Centred-Cubic lattice is expected to be a minimizer when $s>3$) and since the Hexagonal-Close-Packing structure is not a lattice in the sense of our Definition \ref{def:lattices}, our method is not directly applicable in dimension $d=3$ (see also Remark \ref{rmk:nonadapt}).

\medskip

It is interesting to notice that Theorem \ref{mainthm} -- and its analogue for any $(\alpha,\beta)$ where the computer-assisted proof is checked -- directly implies optimality results of a triangular lattice (with the corresponding optimal scaling) for Embedded-Atom Models of the form
$$
\mathcal{E}(L):=F\left( E_\rho[L]\right) + E_f[L],\quad L\in \mathcal{L}_2,\quad \textnormal{$E_f$ and $E_\rho$ defined by \eqref{def:EfL2}}, 
$$
and where, for all $r>0$, $\rho(r)=r^{-\beta}$, $f(r)\in \{ r^{-\alpha}, a r^{-\alpha}-b r^{-\beta} \}$ and $F(r)\in \{r^t,-\log r, -c\sqrt{r} \}$, with $(c,t)\in (0,\infty)^2$ and $\alpha>\beta>2$, as recently shown in \cite[Thm. 4.1 and 5.2]{LBEAM21}. The function $\rho$ (resp. $f$) corresponds to the electrons (resp. nuclei) energy contribution of atoms located at lattice sites. This type of semiempirical model, initially introduced by Daw and Baskes \cite{DawBaskes} and which can be  viewed as an intermediate model between zero-temperature phenomenological pair-interaction energies and quantum systems, is routinely used in Molecular Simulation for Material Sciences (see e.g. \cite{EAMReview,LeSar}).

\medskip

Additionally, another consequence of our proof is the following new bound for the co-volume $V$ (i.e. the inverse density) for which $\sqrt{V}\mathsf{A}_2$ is the unique minimizer of $E_f$ in $\mathcal{L}_2(V)$, i.e. among lattices with fixed co-volume $V$ (see Definition \ref{def:lattices} for more details).

\begin{prop}[\textbf{Optimality at fixed high density of the triangular lattice}]\label{prop:bound}
For any $(a,b)\in (0,\infty)^2$, if $0<V\leq \left( \frac{2 a}{b }\right)^{\frac{1}{3}}$ then the triangular lattice $\sqrt{V}\mathsf{A}_2$ is the unique minimizer of $E_f$ defined by \eqref{def:Ef6-12} in $\mathcal{L}_2(V)$ up to rotation.
\end{prop}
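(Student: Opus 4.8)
The plan is to deduce the statement from the scaling structure of the Epstein zeta function together with the key inequality $\mathcal{Q}_{12,6}(L)>2$ that drives Theorem \ref{mainthm}, so that almost nothing beyond elementary algebra remains. Every lattice of co-volume $V$ can be written as $\sqrt{V}L$ for a unique unit-density lattice $L$, and by homogeneity (Lemma \ref{lem:scale}) one has $\zeta_{\sqrt{V}L}(s)=V^{-s/2}\zeta_L(s)$. Writing $\delta_s(L):=\zeta_L(s)-\zeta_{\mathsf{A}_2}(s)$ for the unit-density excess, I would first record the identity
\[
E_f[\sqrt{V}L]-E_f[\sqrt{V}\mathsf{A}_2]=V^{-6}\big(a\,\delta_{12}(L)-bV^{3}\,\delta_6(L)\big),
\]
valid for every unit-density $L$, so that the whole problem reduces to showing the bracket is positive whenever $L\neq\mathsf{A}_2$.

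Two ingredients then finish the proof. By the Rankin--Cassels--Diananda--Ennola minimality of $\mathsf{A}_2$ for $\zeta_{\cdot}(s)$ at fixed unit density, valid for every $s>2$, both excesses satisfy $\delta_{12}(L)\geq 0$ and $\delta_6(L)\geq 0$, with simultaneous equality if and only if $L=\mathsf{A}_2$ up to rotation. For $L\neq\mathsf{A}_2$ we therefore have $\delta_6(L)>0$, and the hypothesis $V\leq(2a/b)^{1/3}$ gives $bV^{3}\leq 2a$; since $\delta_6(L)\geq 0$ this yields
\[
a\,\delta_{12}(L)-bV^{3}\,\delta_6(L)\;\geq\;a\,\delta_{12}(L)-2a\,\delta_6(L)\;=\;a\,\delta_6(L)\big(\mathcal{Q}_{12,6}(L)-2\big).
\]
Invoking the inequality $\mathcal{Q}_{12,6}(L)>2=\alpha/\beta$ established in the proof of Theorem \ref{mainthm} for all unit-density $L\neq\mathsf{A}_2$, the right-hand side is strictly positive, so $E_f[\sqrt{V}L]>E_f[\sqrt{V}\mathsf{A}_2]$. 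This is precisely the asserted strict minimality, and uniqueness up to rotation is exactly the equality case of the Rankin et al.\ theorem.

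The only substantive input is the inequality $\mathcal{Q}_{12,6}>2$, which is the hard, computer-assisted content already secured for Theorem \ref{mainthm}; once it is in hand the proposition is immediate, so I do not expect a genuine obstacle here. The single point worth care is the endpoint $V=(2a/b)^{1/3}$, where $bV^{3}=2a$ exactly: it is the strictness of $\mathcal{Q}_{12,6}>2$ (rather than $\geq$) that covers this boundary value, so no separate endpoint argument is needed. Finally, I would note that the very same computation yields the general threshold $V\leq\big(\tfrac{a\alpha}{b\beta}\big)^{2/(\alpha-\beta)}$ whenever the analogue $\mathcal{Q}_{\alpha,\beta}>\alpha/\beta$ holds, of which Proposition \ref{prop:bound} is the case $(\alpha,\beta)=(12,6)$.
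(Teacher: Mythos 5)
Your proof is correct and follows essentially the same route as the paper: homogeneity of $\zeta_L(s)$ (Lemma \ref{lem:scale}) plus the Rankin--Ennola--Cassels--Diananda positivity of $\zeta_L(s)-\zeta_{\mathsf{A}_2}(s)$ (Theorem \ref{thm:mont}) reduce the claim to the computer-verified inequality $\mathcal{Q}_{12,6}(L)>2$, which is exactly the paper's chain of equivalences leading to the threshold $V_{\alpha,\beta}=\left(\frac{\beta}{\alpha}\right)^{\frac{2}{\alpha-\beta}}\inf_{L\neq \mathsf{A}_2}\mathcal{Q}_{\alpha,\beta}(L)^{\frac{2}{\alpha-\beta}}$, specialized to $(\alpha,\beta)=(12,6)$ and carried out with general $(a,b)$ rather than the paper's normalization $a=1$, $b=\alpha/\beta$. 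The only bookkeeping point is the direction of citation: in the paper the inequality $\mathcal{Q}_{12,6}(L)>2$ is established (Steps 1--3 of Section \ref{sec:strategy} together with Section \ref{sec:126}) as the proof of Proposition \ref{prop:bound} itself, with Theorem \ref{mainthm} then a corollary via Theorem \ref{thm:LJ}, so your appeal to ``the inequality secured for Theorem \ref{mainthm}'' is not circular since that inequality is proved independently of both statements.
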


This result improves the previous bound we found in \cite{Betermin:2014fy,BetTheta15}, which was $\left( \frac{ \pi^3 a}{60 b}\right)^{\frac{1}{3}}$. Furthermore, our proof is actually designed to show that the general bound for fixed $(\alpha,\beta)$ and any $(a,b)$ is actually $\left(\frac{a\alpha}{b\beta}  \right)^{\frac{2}{\alpha-\beta}}$ (see Section \ref{sec:strategy} and Section \ref{sec:otherexponents} for applications to other exponents), which again improves the bound of \cite[Prop. 6.11]{BetTheta15} obtained by using a Riemann's splitting argument. Notice that this bound is still not the sharpest one since we expect the triangular lattice to be minimal in $\mathcal{L}_2(V)$ for slightly larger values of $V$ (see  \cite{Beterloc,SamajTravenecLJ}). It is also important to notice that Proposition \ref{prop:bound} actually implies Theorem \ref{mainthm}. However, we have chosen to write the results in this order for highlighting our main global optimality result.

\medskip

We would like to add to this introduction the statement of an important conjecture related to our problem, which implies Theorem \ref{mainthm} and Proposition \ref{prop:bound} for any pair of exponents. Indeed, the present work shows a method to prove the optimality of the triangular lattice for any Lennard-Jones type potentials, but only once the exponents $(\alpha,\beta)$ are fixed. It would be obviously very interesting to find a more general proof since it is expected that these results hold for all exponents (see \cite{BetTheta15,Beterloc,OptinonCM}). Theorem \ref{mainthm} and Proposition \ref{prop:bound} would actually hold for any parameters $2<\beta<\alpha$ if we were able to show the following conjecture.
\begin{Conjecture}\label{conj}
For all $s>2$, $\mathsf{A}_2$ is the unique minimizer, among the two-dimensional lattices of unit density, of
$$
\mathcal{F}_s(L):= -\sideset{}{'} \sum_{p\in L} \frac{s\log |p|+1}{|p|^s}.
$$
\end{Conjecture}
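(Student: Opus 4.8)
The plan is to rewrite $\mathcal{F}_s$ through the heat kernel and reduce, as far as possible, to Montgomery's theorem \cite{Mont}. Since $\zeta_L'(s)=-\sideset{}{'}\sum_{p\in L}\frac{\log|p|}{|p|^s}$, one has $\mathcal{F}_s(L)=s\,\zeta_L'(s)-\zeta_L(s)$. Starting from $|p|^{-s}=\frac{\pi^{s/2}}{\Gamma(s/2)}\int_0^\infty t^{\frac s2-1}e^{-\pi t|p|^2}\,dt$, differentiating under the integral sign in $s$ (the digamma function $\psi=\Gamma'/\Gamma$ coming from the prefactor) and summing over $p\in L\setminus\{0\}$ yields, for every unit-density $L$ and $s>2$,
\[
\mathcal{F}_s(L)=\frac{\pi^{s/2}}{\Gamma(s/2)}\int_0^\infty t^{\frac s2-1}\left[\tfrac s2\log(\pi t)-\tfrac s2\psi\!\left(\tfrac s2\right)-1\right]\bigl(\theta_L(t)-1\bigr)\,dt,
\]
where $\theta_L(t)=\sum_{p\in L}e^{-\pi t|p|^2}$, the integral converging for $s>2$ as required. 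Writing $D_L(t):=\theta_L(t)-\theta_{\mathsf{A}_2}(t)$, Montgomery's theorem gives $D_L(t)\ge0$ for all $t>0$, with equality iff $L=\mathsf{A}_2$, and the conjecture is equivalent to
\[
\mathcal{F}_s(L)-\mathcal{F}_s(\mathsf{A}_2)=\frac{\pi^{s/2}}{\Gamma(s/2)}\int_0^\infty t^{\frac s2-1}\left[\tfrac s2\log(\pi t)-\tfrac s2\psi\!\left(\tfrac s2\right)-1\right]D_L(t)\,dt>0,\qquad L\neq\mathsf{A}_2.
\]

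The difficulty is immediate: the weight $w_s(t)=t^{\frac s2-1}\left[\tfrac s2\log(\pi t)-\tfrac s2\psi(\tfrac s2)-1\right]$ changes sign once, being negative for small $t$ and positive for large $t$ (its zero is at $t_0=\pi^{-1}e^{\psi(s/2)+2/s}$). This is exactly the non-complete-monotonicity obstruction of \cite{OptinonCM}: the integrand is not pointwise nonnegative, so the bound $D_L\ge0$ alone does not close the argument. My first attempt to repair this would be a functional-equation folding: in unit density and dimension two $\theta_L(t)=t^{-1}\theta_{L^*}(1/t)$, and since $\mathsf{A}_2$ is self-dual up to rotation one gets $D_L(t)=t^{-1}D_{L^*}(1/t)$. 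Splitting the integral at $t=1$ and substituting $t\mapsto1/t$ on the region carrying the negative weight turns it into an integral over $\{t>1\}$ against $D_{L^*}$, the hope being to combine the two contributions into a single manifestly positive expression. The obstruction here is that $L^*\neq L$ in general, so the folding couples $L$ to its dual instead of producing a clean single-sign kernel, and one is left having to control the antisymmetric part $D_L-D_{L^*}$.

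The route I would actually push to completion is the modular one, following the strategy of Cassels, Ennola and Diananda \cite{Cassels,Eno2,Diananda} for $\zeta_L$ itself. Parametrizing unit-density lattices by $z=x+iy$ in the upper half-plane via $L_z=y^{-1/2}[\Z(1,0)\oplus\Z(x,y)]$, the map $z\mapsto\zeta_{L_z}(s)$ is a real-analytic $\mathrm{PSL}_2(\Z)$-invariant Eisenstein series, and $\mathcal{F}_s$ is a fixed linear combination of it and its $s$-derivative, so the problem reduces to the standard fundamental domain $\mathcal{D}$. The point $z_0=e^{i\pi/3}$ representing $\mathsf{A}_2$ is automatically a critical point for every $s$, because the order-three stabilizer of the hexagonal point acts on the tangent plane by rotation of angle $2\pi/3$, forcing the gradient of any smooth modular-invariant function to vanish there. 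It then remains (a) to prove nondegenerate local minimality by computing the Hessian at $z_0$ from the Chowla--Selberg / $K$-Bessel Fourier expansion of the Eisenstein series and of its $s$-derivative, and (b) to upgrade this to global minimality on $\mathcal{D}$ by the classical boundary reduction --- restricting to the two boundary pieces $x=\tfrac12$ and $|z|=1$, where extra reflection symmetries hold, using the rapidly convergent $q$-expansion to obtain monotonicity toward $z_0$, and excluding spurious interior critical points.

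The main obstacle is step (b) in the presence of the $s$-derivative. Differentiating the Eisenstein series in $s$ introduces $\psi$, $\zeta'(s)$ and derivatives of the Bessel functions $K_\nu$, so the sign patterns and monotonicities that Cassels and Ennola exploited for $\zeta_{L_z}(s)$ are no longer manifest for the twisted combination $s\zeta_{L_z}'(s)-\zeta_{L_z}(s)$; together with the genuinely sign-changing weight $w_s$, there is no pointwise comparison left to fall back on. Establishing the correct boundary monotonicity and the absence of extraneous critical points, uniformly in $s>2$, is where the essential work lies, and it is precisely this that keeps the statement at the level of a conjecture. I note finally that a proof is worthwhile beyond its own sake: since $\mathcal{F}_s(L)-\mathcal{F}_s(\mathsf{A}_2)=s^2\frac{d}{ds}\bigl[(\zeta_L(s)-\zeta_{\mathsf{A}_2}(s))/s\bigr]$, the conjecture asserts exactly that $s\mapsto(\zeta_L(s)-\zeta_{\mathsf{A}_2}(s))/s$ is strictly increasing for $L\neq\mathsf{A}_2$, which immediately gives $\mathcal{Q}_{\alpha,\beta}(L)>\alpha/\beta$ for all $2<\beta<\alpha$ and hence Theorem \ref{mainthm} and Proposition \ref{prop:bound} for every pair of exponents.
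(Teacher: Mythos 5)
You were asked to prove a statement that the paper itself leaves open: Conjecture \ref{conj} is never proved in the paper. The only things the paper does with it (end of Section 3.3) are (i) the identity $s\partial_s\zeta_L(s)-\zeta_L(s)=-\sideset{}{'}\sum_{p\in L}\frac{s\log|p|+1}{|p|^s}=\mathcal{F}_s(L)$, (ii) the resulting equivalence between the conjecture and strict monotonicity of $s\mapsto\bigl(\zeta_L(s)-\zeta_{\mathsf{A}_2}(s)\bigr)/s$, hence the implication $\mathcal{Q}_{\alpha,\beta}(L)>\alpha/\beta$ for all $2<\beta<\alpha$, which would give Theorem \ref{mainthm} and Proposition \ref{prop:bound} for every pair of exponents, and (iii) the remark that the statement is known only for $2<s<2\left(\psi^{-1}(\log\pi)-1\right)\approx 5.25$ by earlier work of the author, and ``stays totally open for larger $s$.'' Your final paragraph reproduces exactly the content of (i)--(ii), correctly. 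Your intermediate computations are also sound: the Mellin/heat-kernel representation of $\mathcal{F}_s$ is right, the weight $w_s$ does change sign at $t_0=\pi^{-1}e^{\psi(s/2)+2/s}$, and the folding identity $D_L(t)=t^{-1}D_{L^*}(1/t)$ is the correct unit-density two-dimensional Poisson summation statement.

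As a proof, however, the proposal has the decisive gap that it proves nothing: both of your strategies terminate at an obstruction that you name but do not overcome. The theta-folding route leaves the antisymmetric contribution involving $D_L-D_{L^*}$ uncontrolled (off the symmetric points $L^*\neq L$, so no single-signed kernel emerges), and the modular route leaves unproven precisely the global statements --- boundary monotonicity of $s\zeta_{L_z}'(s)-\zeta_{L_z}(s)$ on the fundamental domain and the exclusion of interior critical points, uniformly in $s>2$ --- that constitute the entire difficulty; the critical-point observation at $z_0=e^{i\pi/3}$ plus a Hessian computation would give at best local minimality, which is essentially known (strict local minimality of $L\mapsto\zeta_L(s)$ at $\mathsf{A}_2$ is the Coulangeon--Sch\"urmann result the paper already invokes). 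Your diagnosis of why the problem is hard matches the paper's own: the potential $r\mapsto-(s\log r+1)r^{-s}$ is a one-well, non-completely-monotone function, so Montgomery's theorem cannot be applied pointwise. Your closing sentence, that this ``is precisely what keeps the statement at the level of a conjecture,'' is accurate --- and it is equally an admission that no proof has been given. In short: your reformulations and obstructions agree with the paper's discussion, but neither you nor the paper establishes the statement, which remains open for $s$ beyond roughly $5.25$.
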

Indeed, this conjecture is equivalent to the fact that $\mathcal{Q}_{\alpha,\beta}(L)>\frac{\alpha}{\beta}$ for all unit density lattice $L$ and all $2<\beta<\alpha$  which implies Proposition \ref{prop:bound} and then Theorem \ref{mainthm} for all the exponents (see Section \ref{sec:otherexponents} for details). We have actually already shown this result in \cite[Thm. 1.2.B.2]{BetTheta15} for $2<s<2\left( \psi^{-1}(\log \pi)-1 \right)\approx 5.25$ (see also \cite[Rmk. 6.18]{BetTheta15} for details) but the conjecture stays totally open for larger $s$. Furthermore, since $r\mapsto -(s\log r +1)r^{-s}$ is a one-well potential with minimum at $r=1$, Conjecture \ref{conj} stays however difficult to solve and is therefore an interesting challenging problem.

\medskip

Finally, it has to be noticed that out study's setting does not cover other important physical cases such as periodic topological defect pattern in the background of the lattice structures or various symmetry-breaking amorphous lattices. For example, dislocations may be periodically distributed over the triangular lattice. Considering
the long tail of the Lennard-Jones potential, the above-mentioned two kinds of cases could not be readily excluded as the energy minima. The reader can for example refer to \cite{OrtneretalLJ} and references therein.

\medskip

\textbf{Plan of the paper.} We first recall in Section \ref{sec:setting} the notions of lattices and energies (see Section \ref{sec:latticeenergy}), as well previous results on Epstein zeta function and Lennard-Jones type energies (see Section \ref{sec:LJrecall}). Section \ref{proof} is devoted to the proof of Theorem \ref{mainthm} and Proposition \ref{prop:bound} where the general method is explained in Section \ref{sec:strategy}, the application to the parameters $(\alpha,\beta)=(12,6)$ is shown in Section \ref{sec:126} and to others exponents in Section \ref{sec:otherexponents}. Finally, a central technical auxiliary lemma is proved in Section \ref{subsec:prooflemmas}.

\section{Definitions and previous results on lattice energies}\label{sec:setting}

\subsection{Lattices and Energies}\label{sec:latticeenergy}

We first define the set of periodic discrete configurations of points we are interested in, as well as their associated quadratic forms.

\begin{defi}[Lattices and associated quadratic form]\label{def:lattices}
We call $\mathcal{L}_2$ the set of two-dimensional lattices of the form $L=\Z u \oplus \Z v$ where $\{u,v\}\subset \R^2$ is a basis of $\R^2$. Furthermore, we write $\mathcal{L}_2(V)\subset \mathcal{L}_2$ the set of lattices $L=\Z u \oplus \Z v$ with co-volume $|\det(u,v)|=V$. 

\medskip

\noindent Moreover, to any lattice $L=\Z u \oplus \Z v \in \mathcal{L}_2$ is associated a quadratic form $Q_L$ defined by
$$
Q_L(m,n):=\left| m u + nv \right|^2, \quad \forall (m,n)\in \Z^2.
$$
\end{defi}

Furthermore, the theory of quadratic forms' reduction (see e.g. \cite{Terras_1988} or \cite{Mont}) allows to restrict our set of lattices to a fundamental domain where each of them appears only once and can be parametrized by a point $(x,y)\in \R^2$.

\begin{prop}[Parametrization and fundamental domain]\label{prop:D}
Any lattice $L\in \mathcal{L}_2(1)$ can be uniquely parametrized by $(x,y)\in \mathcal{D}\subset \R^2$ such that
\begin{equation}\label{def:D}
\mathcal{D}:=\left\{(x,y)\in \R^2 : y>0, x\in [0,1/2], x^2+y^2\geq 1  \right\}.
\end{equation}
The set $\mathcal{D}$, depicted in Figure \ref{fig:ex}, is called the half-fundamental domain, the parametrization can be written
$$
L=\Z u \oplus \Z v,\quad \textnormal{where}\quad u=\left( \frac{1}{\sqrt{y}},0 \right) \quad \textnormal{and}\quad v=\left(\frac{x}{\sqrt{y}},\sqrt{y}  \right),
$$
and the quadratic form of such $L$ is therefore
$$
Q_L(m,n)=\frac{1}{y}(m+xn)^2 + y n^2,\quad \forall (m,n)\in \Z^2.
$$
\end{prop}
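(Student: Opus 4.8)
The plan is to recognize Proposition \ref{prop:D} as the classical reduction theory of binary quadratic forms, equivalently the construction of a fundamental domain for the modular group acting on the upper half-plane, specialized to covolume-$1$ lattices and further quotiented by reflection. Identifying $\R^2$ with $\C$, I would associate to a lattice $L=\Z u\oplus\Z v\in\mathcal{L}_2(1)$ the modulus $\tau=v/u$; after fixing the orientation of the basis we may take $\tau$ in the open upper half-plane $\mathbb{H}$. Replacing the ordered basis $(u,v)$ by an $\mathrm{SL}_2(\Z)$-equivalent one sends $\tau$ to its image under the corresponding Möbius transformation, while reflecting the ambient plane sends $\tau\mapsto-\bar\tau$. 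Since the energies $E_f$ depend only on the Euclidean lengths $|p|$, they are invariant under the full orthogonal group, so it is enough to parametrize lattices up to rotation and reflection, i.e. to exhibit a fundamental domain for the group generated by $\mathrm{PSL}_2(\Z)$ and $\tau\mapsto-\bar\tau$ on $\mathbb{H}$.

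For the existence of the reduced representative I would run the Lagrange--Gauss reduction algorithm: pick $u\in L\setminus\{0\}$ of minimal Euclidean length, then $v\in L$ of minimal length among vectors not proportional to $u$. Rotating $L$ so that $u$ lies on the positive horizontal axis, reflecting across that axis if necessary, and subtracting a suitable integer multiple of $u$ from $v$, I can arrange $u=(|u|,0)$ and $v=(v_1,v_2)$ with $v_2>0$ and $0\le v_1\le |u|/2$. Setting $|u|=1/\sqrt{y}$ and $v_2=\sqrt{y}$, the covolume normalization $|\det(u,v)|=|u|\,v_2=1$ is automatic and gives the stated $u=(1/\sqrt{y},0)$, $v=(x/\sqrt{y},\sqrt{y})$ with $x=v_1|u|^{-1}\in[0,1/2]$. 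The minimality conditions $|u|\le|v|$ and $|v_1|\le|u|/2$ then translate, after this normalization and the reflection taking $v_1\ge 0$, into exactly $x^2+y^2\ge 1$ and $x\in[0,1/2]$, that is, $(x,y)\in\mathcal{D}$. The quadratic form identity is then a one-line check: $Q_L(m,n)=|mu+nv|^2=(m+nx)^2/y+n^2y=\tfrac{1}{y}(m+xn)^2+yn^2$.

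Uniqueness is the heart of the matter and the step I expect to require the most care. I must show that two distinct points of $\mathcal{D}$ cannot yield isometric lattices, which amounts to the standard but delicate fact that $\{\,|x|\le 1/2,\ x^2+y^2\ge 1\,\}$ is a fundamental domain for $\mathrm{PSL}_2(\Z)$ on $\mathbb{H}$: every orbit meets it, and two interior points in the same orbit coincide. The subtlety is entirely on the boundary, where $x=1/2$ is glued to $x=-1/2$ by $\tau\mapsto\tau+1$ and the two arcs of the unit circle are glued by $\tau\mapsto-1/\tau$. Passing to the reflection quotient $x\mapsto-x$ collapses precisely these identifications, so the half-strip $x\in[0,1/2]$, $x^2+y^2\ge 1$, equipped with the stated closed inequalities, selects exactly one representative of each isometry class. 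I would handle this by invoking the reduction theory developed in \cite{Terras_1988} and \cite{Mont}, or alternatively argue directly that a length-minimizing basis is unique up to the symmetries already quotiented out, the remaining ambiguities occurring only along the boundary arcs and vertical edge, where the chosen closed conditions pin down a single point. This boundary bookkeeping, rather than the existence part, is where the genuine work lies.
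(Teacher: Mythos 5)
Your proposal is correct and follows essentially the same route as the paper, which states Proposition \ref{prop:D} without proof and simply cites the classical reduction theory of binary quadratic forms in \cite{Terras_1988} and \cite{Mont}; your sketch is precisely that standard argument (Lagrange--Gauss reduction for existence, the $\mathrm{PSL}_2(\Z)$ fundamental domain on $\mathbb{H}$ quotiented by the reflection $\tau\mapsto-\bar\tau$ for uniqueness up to isometry). Your boundary bookkeeping -- the edge $x=1/2$ glued to $x=-1/2$ by $\tau\mapsto\tau+1$ and the two arcs of the unit circle glued by $\tau\mapsto-1/\tau$, both collapsed by the reflection quotient so that the closed conditions $x\in[0,1/2]$, $x^2+y^2\geq 1$ select exactly one representative -- is exactly what the cited references establish, and your reading that ``uniquely parametrized'' means up to rotation and reflection is the paper's intended convention.
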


\begin{examples}[The triangular and square lattices]
The square lattice $\Z^2$ and the triangular lattice $\mathsf{A}_2$, respectively defined by
$$
\Z^2= \Z(1,0)\oplus \Z(0,1) \quad \textnormal{and}\quad \mathsf{A}_2:=\sqrt{\frac{2}{\sqrt{3}}}\left[\Z (1,0)\oplus \Z\left( \frac{1}{2},\frac{\sqrt{3}}{2}\right)  \right],
$$
belong to $\mathcal{L}_2(1)$ and are respectively parametrized in $\mathcal{D}$ by $(0,1)$ and $\left(\frac{1}{2},\frac{\sqrt{3}}{2} \right)$.
\end{examples}

We now define our main lattice energies, namely the Epstein zeta function and the Lennard-Jones type energy which is actually a linear combination of the first one. The Epstein zeta function was originally introduced by Epstein \cite{Epstein1} and appears in many contexts (see e.g. \cite{Sobolev,Henn,BeterminKnuepfer-preprint}).

\begin{defi}[Epstein zeta function and Lennard-Jones type energy]\label{def:EpstLJ}
For any $s>2$, the Epstein zeta function $\zeta_L(s)$ of $L\in \mathcal{L}_2$ and its $N^{th}$ partial sum $\zeta^N_L(s)$, for some $N\in \N$, are defined by
\begin{equation}\label{def:partialEpstein}
\zeta_L(s):=\sideset{}{'}\sum_{p\in L} \frac{1}{|p|^s}=\sideset{}{'}\sum_{(m,n)\in \Z^2} \frac{1}{Q_L(m,n)^{\frac{s}{2}}}\quad \textnormal{and}\quad  \zeta^N_L(s):=\sideset{}{'}\sum_{|m|\leq N, |n|\leq N \atop (m,n)\in \Z^2} \frac{1}{Q_L(m,n)^{\frac{s}{2}}},
\end{equation}
where $\sideset{}{'}\sum$ is the sum where the origin is omitted. 

\medskip

Furthermore, for any $\alpha>\beta>2$ and $(a,b)\in (0,\infty)^2$, the Lennard-Jones type potential is defined by
$$
f(r):=\frac{a}{r^\alpha}-\frac{b}{r^\beta},
$$
and the corresponding Lennard-Jones type energy of $L\in \mathcal{L}_2$ is given by
$$
E_f[L]:=\sideset{}{'}\sum_{p\in  L} f(|p|)=a\zeta_L(\alpha)-b\zeta_L(\beta)=\sideset{}{'} \sum_{p\in  L}\left[ \frac{a}{|p|^\alpha}- \frac{b}{|p|^\beta} \right]=\sideset{}{'}\sum_{(m,n)\in \Z^2} \left[\frac{a}{Q_L(m,n)^{\frac{\alpha}{2}}}- \frac{b}{Q_L(m,n)^{\frac{\beta}{2}}} \right].
$$
\end{defi}
\begin{remark}
In the proof of Theorem \ref{mainthm} and Proposition \ref{prop:bound}, we will use a computer to check the values of a quotient $\mathcal{Q}_{\alpha,\beta}(L)$ among unit density lattices $L$. For that, we will obviously need to work with $\zeta_L^N(s)$ defined by \eqref{def:partialEpstein} for which a good approximation is known (see e.g. \cite{CrandallFastEval}). Since our goal is to work with exponents that are rather large (typically $(\alpha,\beta)=(12,6)$), only few terms (i.e. small $N$, which actually depends on $y$ in the parametrization in $\mathcal{D}$) are sufficient to accurately approximate $\zeta_L(s)$ by $\zeta_L^N(s)$.
\end{remark}

\subsection{Results on Epstein zeta functions and Lennard-Jones type energies}\label{sec:LJrecall}

In this section, we summarize important results concerning the Lennard-Jones type energies that we will need for our proof. The reader can refer to \cite{BetTheta15,Beterloc,OptinonCM,SamajTravenecLJ} for more details, proofs and results, including the optimality of the triangular lattice for $E_f$ for small exponents shown in \cite[Theorem 1.2.B.2]{BetTheta15}. We first start by a straightforward scaling formula.

\begin{lemma}[Scaling formulas]\label{lem:scale}
For any $L\in \mathcal{L}_2(1)$ and any $V>0$, $\sqrt{V}L \in \mathcal{L}_2(V)$ and, furthermore, for any $s>2$ and any $\alpha>\beta>2$, we have
\begin{equation}\label{eq:homog}
\zeta_{\sqrt{V} L}(s)=V^{-\frac{s}{2}}\zeta_L(s),\quad \textnormal{and}\quad E_f[\sqrt{V}L]= a V^{-\frac{\alpha}{2}}\zeta_L(\alpha)- b V^{-\frac{\beta}{2}}\zeta_L(\beta).
\end{equation}
\end{lemma}

We also recall the famous optimality result found by Rankin-Ennola-Cassels-Diananda in \cite{Rankin,Eno2,Cassels,Diananda} and generalized by Montgomery in \cite{Mont}.

\begin{thm}[Optimality of the triangular lattice for the Epstein zeta function]\label{thm:mont}
For any $s>2$, $\mathsf{A}_2$ is the unique minimizer of $L\mapsto \zeta_L(s)$ in $\mathcal{L}_2(1)$, up to rotation.
\end{thm}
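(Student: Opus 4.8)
The plan is to deduce the statement from the (stronger) fact that $\mathsf{A}_2$ minimises the lattice theta function
\[
\theta_L(t):=\sum_{p\in L}e^{-\pi t|p|^2},\qquad t>0,
\]
at \emph{every} fixed $t>0$ among unit-density lattices, and then to integrate against a positive kernel. First I would record the heat-kernel (Mellin) representation obtained by inserting the Gamma-function identity $|p|^{-s}=\frac{\pi^{s/2}}{\Gamma(s/2)}\int_0^\infty t^{s/2-1}e^{-\pi t|p|^2}\,dt$ into \eqref{intro:zeta} and exchanging sum and integral:
\[
\zeta_L(s)=\frac{\pi^{s/2}}{\Gamma(s/2)}\int_0^\infty t^{s/2-1}\bigl(\theta_L(t)-1\bigr)\,dt .
\]
The exchange and the convergence are justified for $s>2$: near $t=\infty$ the integrand decays exponentially, while near $t=0$ the Jacobi (modular) transformation gives $\theta_L(t)-1\sim t^{-1}$ for $L\in\mathcal{L}_2(1)$, so that $t^{s/2-1}(\theta_L(t)-1)\sim t^{s/2-2}$ is integrable exactly when $s>2$.

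Since the kernel $t^{s/2-1}$ is strictly positive, pointwise minimality of $t\mapsto\theta_L(t)-1$ by $L=\mathsf{A}_2$ transfers verbatim to $\zeta_L(s)$; moreover a strict inequality $\theta_L(t)>\theta_{\mathsf{A}_2}(t)$ at a single $t$ for any $L\neq\mathsf{A}_2$ (up to rotation) upgrades to a strict inequality for the integral, yielding the uniqueness clause. The whole problem is thereby reduced to the theta minimality, which is the content of Montgomery \cite{Mont} and which recovers the earlier zeta-specific statements of \cite{Rankin,Eno2,Cassels,Diananda}.

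For the theta minimality itself I would use the parametrisation of Proposition \ref{prop:D}, writing $\tau=x+iy$ with $(x,y)\in\mathcal{D}$ and $Q_L(m,n)=y^{-1}(m+xn)^2+yn^2$. Splitting the sum over $n$ and applying Poisson summation in the $m$-variable expresses $\theta_L$ through one-dimensional Jacobi theta functions, after which the strategy is a two-stage reduction: first show that for each fixed $y$ the function is non-increasing in $x$ on $[0,1/2]$, so that the minimiser lies on the edge $x=1/2$; then show that along $x=1/2$ the minimum in $y$ is attained at $y=\sqrt3/2$, i.e.\ at the corner $\tau=e^{i\pi/3}$ corresponding to $\mathsf{A}_2$. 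The order-$6$ symmetry of $\mathsf{A}_2$ makes it automatically a critical point, so the real content is to exclude every other configuration.

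The main obstacle is exactly this last step. The theta function is \emph{not} globally convex on $\mathcal{D}$, and the square lattice $\tau=i$ is a rival critical point (a saddle), so a crude Hessian computation cannot conclude; one must control the signs of the Fourier coefficients in $x$ uniformly, ensuring that $x=1/2$ wins \emph{simultaneously for all} $t>0$, and then win the one-variable comparison along that edge. I would therefore regard the integral-representation reduction as routine bookkeeping and budget essentially all of the effort for the sign analysis underlying the theta minimality. An alternative, avoiding theta functions entirely, is the original route of \cite{Rankin,Cassels,Eno2,Diananda}, which attacks $\zeta_L(s)$ directly through its Fourier--Bessel expansion in $x$ and treats the admissible range of $s$ in several pieces; this merely trades the theta sign analysis for an equally delicate Bessel-function estimate.
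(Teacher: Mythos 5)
Your proposal is sound, but note that the paper itself offers \emph{no} proof of Theorem \ref{thm:mont}: it is recalled as a classical result, cited to Rankin, Ennola, Cassels and Diananda \cite{Rankin,Eno2,Cassels,Diananda} and to Montgomery \cite{Mont}, and then used as a black box in the computer-assisted argument. Your Mellin/heat-kernel reduction
\[
\zeta_L(s)=\frac{\pi^{s/2}}{\Gamma(s/2)}\int_0^\infty t^{s/2-1}\bigl(\theta_L(t)-1\bigr)\,dt,\qquad s>2,
\]
with the convergence analysis at $t\to 0$ via the modular transformation (the integrand behaves like $t^{s/2-2}$, integrable exactly for $s>2$) and the transfer of strict pointwise theta inequalities through the strictly positive kernel, is exactly the standard route --- indeed it is how Montgomery himself deduces the zeta statement from his theta theorem --- so, modulo citing \cite{Mont} for the theta minimality, your argument is complete and correct, uniqueness clause included (strictness at one $t$ plus continuity in $t$ gives strictness on a set of positive measure, hence in the integral). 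What you have not done is prove the theta minimality itself: your two-stage plan (monotonicity in $x$ on $[0,1/2]$ for every fixed $t$ after Poisson summation in $m$, then the one-variable comparison along the edge $x=1/2$ down to $y=\sqrt{3}/2$) is a fair outline of Montgomery's actual proof, and you correctly locate the real difficulty in the uniform-in-$t$ sign analysis of the Fourier coefficients, with the square lattice as a rival critical point; but that core estimate is only budgeted, not executed. Since the paper likewise imports Theorem \ref{thm:mont} from the literature rather than proving it, this is not a gap relative to the paper; just be aware that your write-up is a reduction to \cite{Mont} (together with a sketch of its proof), not a self-contained proof of either statement.
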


Finally, we state the variational properties of $E_f$ regarding the minimizer among the dilated versions of a lattice $L$ as well as an upper bound for the co-volume of a global minimizer of $E_f$ (see \cite{BetTheta15,OptinonCM}).

\begin{thm}[Minimality properties of the Lennard-Jones type energy]\label{thm:LJ}
Let $\alpha>\beta>2$. For any $L\in \mathcal{L}_2(1)$, the unique minimizer of $V\mapsto E_f[\sqrt{V} L]$ is
\begin{equation}\label{eq:VL}
V_L:=\left( \frac{a \alpha \zeta_L(\alpha) }{b \beta \zeta_L(\beta)}\right)^{\frac{2}{\alpha-\beta}},
\end{equation}
and the minimal energies among the dilated versions of $L$ is therefore
$$
\min_{V>0}  E_f[\sqrt{V} L] =  E_f[\sqrt{V_L} L]=\frac{b^{\frac{\alpha}{\alpha-\beta}}\zeta_L(\beta)^{\frac{\alpha}{\alpha-\beta}}}{a^{\frac{\beta}{\alpha-\beta}}\zeta_L(\alpha)^{\frac{\beta}{\alpha-\beta}}} \left( \frac{\beta}{\alpha} \right)^{\frac{\beta}{\alpha-\beta}}\left( \frac{\beta}{\alpha}-1   \right)<0.
$$
In particular, the minimizer of $L\mapsto E_f[\sqrt{V_L} L]$ in $\mathcal{L}_2(1)$ only depends on $(\alpha,\beta)$.

\medskip

\noindent Furthermore, if $L_0\in \mathcal{L}_2(V_{L_0})$ is a global minimizer of $E_f$ in $\mathcal{L}_2$, then its volume $V_{L_0}$ is bounded as follows for any set of parameters $(\alpha,\beta,a,b)$:
\begin{equation}\label{eq:UBV}
0<V_{L_0}\leq \left( \frac{a \alpha}{b \beta}\right)^{\frac{2}{\alpha-\beta}}.
\end{equation}
\end{thm}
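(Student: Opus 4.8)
The plan is to dispatch the three scale-optimality assertions by elementary one-variable calculus built on the scaling formula \eqref{eq:homog}, and then to isolate the co-volume bound \eqref{eq:UBV} as the one genuinely geometric statement.

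First I would fix $L\in\mathcal{L}_2(1)$ and, using \eqref{eq:homog}, write $g(V):=E_f[\sqrt{V}L]=a\zeta_L(\alpha)V^{-\alpha/2}-b\zeta_L(\beta)V^{-\beta/2}$ for $V>0$, where both coefficients $a\zeta_L(\alpha)$ and $b\zeta_L(\beta)$ are strictly positive since $a,b>0$ and $\zeta_L(s)>0$ for $s>2$. Differentiating gives $g'(V)=\tfrac12 V^{-\alpha/2-1}\big(b\beta\zeta_L(\beta)V^{(\alpha-\beta)/2}-a\alpha\zeta_L(\alpha)\big)$; the bracket is strictly increasing in $V$ and vanishes exactly at the value $V_L$ of \eqref{eq:VL}, so $g$ is decreasing on $(0,V_L)$ and increasing on $(V_L,\infty)$. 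Hence $V_L$ is the unique minimizer of $V\mapsto E_f[\sqrt{V}L]$, the first claim. For the minimal value I would avoid expanding $g(V_L)$ blindly and instead use the stationarity relation $a\alpha\zeta_L(\alpha)V_L^{-\alpha/2}=b\beta\zeta_L(\beta)V_L^{-\beta/2}$ to eliminate the repulsive term, obtaining $g(V_L)=b\zeta_L(\beta)V_L^{-\beta/2}\big(\tfrac{\beta}{\alpha}-1\big)$; since $\beta<\alpha$ this is automatically negative, and substituting $V_L^{-\beta/2}$ from \eqref{eq:VL} yields the stated closed form.

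For the third claim I would read off from that closed form that $E_f[\sqrt{V_L}L]$ factorizes as the constant $a^{-\beta/(\alpha-\beta)}b^{\alpha/(\alpha-\beta)}(\beta/\alpha)^{\beta/(\alpha-\beta)}(\beta/\alpha-1)$, which depends only on $(\alpha,\beta,a,b)$, times the $L$-dependent factor $\zeta_L(\beta)^{\alpha/(\alpha-\beta)}\zeta_L(\alpha)^{-\beta/(\alpha-\beta)}$. Since this prefactor is a negative number independent of $L$, minimizing $L\mapsto E_f[\sqrt{V_L}L]$ over $\mathcal{L}_2(1)$ is equivalent to maximizing the $L$-factor, whose expression contains $(a,b)$ nowhere; thus the optimal shape depends on $(\alpha,\beta)$ alone.

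The last claim is the one I expect to be delicate. Writing a global minimizer as $L_0=\sqrt{V_{L_0}}\widetilde{L}_0$ with $\widetilde{L}_0\in\mathcal{L}_2(1)$, global minimality forces $L_0$ to be optimal within its own dilation family, so the first claim gives $V_{L_0}=V_{\widetilde{L}_0}=\big(a\alpha\zeta_{\widetilde{L}_0}(\alpha)/(b\beta\zeta_{\widetilde{L}_0}(\beta))\big)^{2/(\alpha-\beta)}$; consequently \eqref{eq:UBV} is equivalent to the single inequality $\zeta_{\widetilde{L}_0}(\alpha)\le\zeta_{\widetilde{L}_0}(\beta)$. A clean sufficient condition is that $\widetilde{L}_0$ carry no nonzero vector shorter than $1$: then $|p|\ge1$ for every $p\in\widetilde{L}_0\setminus\{0\}$ forces $|p|^{-\alpha}\le|p|^{-\beta}$ termwise, and summation gives the inequality. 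The crux is therefore to show that the optimal unit-density shape is sufficiently well-packed and cannot degenerate — indeed the termwise inequality genuinely fails for elongated unit-density lattices, which can carry vectors far shorter than $1$, so some input beyond the calculus above is unavoidable. To control this I would combine Theorem \ref{thm:mont}, which pins both zeta values from below at $\mathsf{A}_2$, with the behaviour of the $L$-factor of the previous paragraph under an area-preserving degeneration: as the shortest vector of a unit-density lattice tends to $0$ this factor tends to a finite limit strictly below its value at $\mathsf{A}_2$, so any maximizer stays a definite distance from the degenerate regime. Upgrading this qualitative non-degeneracy to the sharp geometric statement ``shortest vector $\ge1$'' is the real work, and is precisely what is carried out in \cite{BetTheta15,OptinonCM}; everything else reduces to the elementary analysis above.
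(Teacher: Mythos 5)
Your treatment of the three scaling claims is correct and complete, and it is the standard route: with $g(V)=a\zeta_L(\alpha)V^{-\alpha/2}-b\zeta_L(\beta)V^{-\beta/2}$ your factorization of $g'$ locates the unique critical point at $V_L$ of \eqref{eq:VL}, the stationarity relation $a\alpha\zeta_L(\alpha)V_L^{-\alpha/2}=b\beta\zeta_L(\beta)V_L^{-\beta/2}$ gives $g(V_L)=b\zeta_L(\beta)V_L^{-\beta/2}\left(\tfrac{\beta}{\alpha}-1\right)<0$, and substituting $V_L^{-\beta/2}$ reproduces the stated closed form, whose factorization into a negative $(a,b,\alpha,\beta)$-constant times $\zeta_L(\beta)^{\alpha/(\alpha-\beta)}\zeta_L(\alpha)^{-\beta/(\alpha-\beta)}$ indeed proves the $(a,b)$-independence of the optimal shape. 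Be aware, however, that this paper contains no proof of Theorem \ref{thm:LJ} at all: it is recalled from \cite{BetTheta15,OptinonCM} (``see \cite{BetTheta15,OptinonCM}''), so the only in-paper comparison point is the statement itself, and your calculus portion is at least as complete as what the paper presents.

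The genuine gap is the last claim, \eqref{eq:UBV}, and you have identified but not closed it. Your reduction is exactly right: writing $L_0=\sqrt{V_{L_0}}\widetilde{L}_0$, minimality within the dilation family gives $V_{L_0}=V_{\widetilde{L}_0}$, and \eqref{eq:UBV} becomes the single inequality $\zeta_{\widetilde{L}_0}(\alpha)\leq\zeta_{\widetilde{L}_0}(\beta)$. But neither of your two suggested mechanisms proves it. The termwise sufficient condition ``shortest vector of $\widetilde{L}_0$ at least $1$'' is strictly stronger than what is needed, and nothing in your argument forces it: Minkowski gives only an upper bound $\lambda_1^2\leq\tfrac{2}{\sqrt{3}}$ for unit-density lattices, and a lower bound on $\lambda_1$ at the unknown optimal shape is itself a nontrivial statement. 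The degeneration computation (the $L$-factor tending to $(2\zeta(\beta))^{\alpha/(\alpha-\beta)}/(2\zeta(\alpha))^{\beta/(\alpha-\beta)}$ as $\lambda_1\to0$) only yields a qualitative compactness of near-maximizers; it cannot yield the sharp inequality, because the soft constraints available here are mutually consistent with the opposite conclusion. Concretely, for $(\alpha,\beta)=(12,6)$ the conditions ``$e(\widetilde{L}_0)\geq e(\mathsf{A}_2)$'' (i.e. $y^2/x\geq\zeta_{\mathsf{A}_2}(6)^2/\zeta_{\mathsf{A}_2}(12)\approx6.8$, with $x=\zeta_{\widetilde{L}_0}(12)$, $y=\zeta_{\widetilde{L}_0}(6)$) together with Theorem \ref{thm:mont}'s bounds $x\geq\zeta_{\mathsf{A}_2}(12)\approx2.54$, $y\geq\zeta_{\mathsf{A}_2}(6)\approx4.14$ are satisfied by abstract pairs such as $(x,y)=(12,10)$ with $x>y$; so any argument using only these ingredients must fail, and genuinely lattice-specific input is required — this is precisely what is supplied in \cite{BetTheta15,OptinonCM}, to which you (like the paper) defer. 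Note finally that one cannot shortcut via ``the optimal shape is $\mathsf{A}_2$, all of whose nonzero vectors have length $\sqrt{2/\sqrt{3}}>1$'': in this paper that optimality (Theorem \ref{mainthm}) is itself deduced using \eqref{eq:UBV}, so such an argument would be circular here.
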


\begin{remark}
In the above theorem, notice that the minimizer of $L\mapsto E_f[\sqrt{V_L}L]$ (i.e. the ``shape" of the minimizer of $E_f$, see \cite[Sec. 1.1]{OptinonCM}) does not depend on $(a,b)\in (0,+\infty)$ but the minimal value of the energy $E_f$ does indeed depend on these parameters.
\end{remark}

\section{Proof of our results and generalization}\label{proof}

\subsection{Description of our general computer-assisted method}\label{sec:strategy}

We aim to present a proof of Theorem \ref{mainthm} and Proposition  \ref{prop:bound} as general as possible in order to be able to use our computer-assisted method for any pair of exponents $(\alpha,\beta)$. By Theorem \ref{thm:LJ}, we know that the ``shape" of the global minimizer of $E_f$, i.e. the minimizer of $L\mapsto E_f[\sqrt{V_L} L]$ in $\mathcal{L}_2(1)$ where $V_L$ is given by \eqref{eq:VL}, is independent of $(a,b)$. Therefore, we choose $a=1$ and $b=\frac{\alpha}{\beta}$ in such a way that the Lennard-Jones potential
$$
f(r)=\frac{1}{r^{\alpha}}-\frac{\alpha}{\beta}\frac{1}{r^\beta}
$$
reaches its minimum for $r=1$. Furthermore, it also follows that any global minimizer of $E_f$ has its volume smaller than $1$ according to \eqref{eq:UBV} since $\frac{a\alpha}{b\beta}=1$. Our goal is therefore to show that the triangular lattice $\sqrt{V}\mathsf{A}_2$ is the unique minimizer of $E_f$ in $\mathcal{L}_2(V)$ for all $V\leq 1$ and Theorem \ref{mainthm} will be proved. This is actually equivalent, by a simple scaling argument, with the following statement generalizing Proposition \ref{prop:bound}:
\begin{itemize}
\item  For any $(\alpha,\beta,a,b)\in (0,\infty)^4$, if $V\leq \left(\frac{a\alpha}{b\beta}  \right)^{\frac{2}{\alpha-\beta}}$ then the triangular lattice $\sqrt{V}\mathsf{A}_2$ is the unique minimizer of $E_f$ in $\mathcal{L}_2(V)$, up to rotation.
\end{itemize}
This automatically holds if the proof we are presenting in this section is applicable. In particular, we will show this result for $(\alpha,\beta)=(12,6)$ in Section \ref{sec:126} and for other exponents in Section \ref{sec:otherexponents}. To summarize, Theorem \ref{mainthm} is actually a corollary of Proposition \ref{prop:bound} as a simple consequence of Theorem \ref{thm:LJ}. However, we have chosen to present Theorem \ref{mainthm} as the main result of this paper.

\medskip

We remark that, using the homogeneity formulas stated in \eqref{eq:homog}, for $V>0$, we have
\begin{align*}
& \quad\quad\quad E_f[\sqrt{V} L]-E_f[\sqrt{V}  \mathsf{A}_2]\geq 0, \forall L\in \mathcal{L}_2(1)\\
&\iff V^{-\frac{\alpha}{2}}\left( \zeta_L(\alpha)-\zeta_{\mathsf{A}_2}(\alpha)\right)\geq \frac{\alpha}{\beta}V^{-\frac{\beta}{2}}\left( \zeta_L(\beta)-\zeta_{\mathsf{A}_2}(\beta)\right), \forall L\in \mathcal{L}_2(1) \\
& \iff V\leq V_{\alpha,\beta}:=\left( \frac{\beta}{\alpha}\right)^{\frac{2}{\alpha-\beta}} \inf_{L\in \mathcal{L}_2(1) \atop L\neq \mathsf{A}_2} \left( \frac{\zeta_L(\alpha)-\zeta_{\mathsf{A}_2}(\alpha)}{\zeta_L(\beta)-\zeta_{\mathsf{A}_2}(\beta)} \right)^{\frac{2}{\alpha-\beta}},
\end{align*}
where the inequality in the second equivalence follows from Theorem \ref{thm:mont}, i.e. the optimality of $\mathsf{A}_2$ in $\mathcal{L}_2(1)$ for $L\mapsto \zeta_L(s)$ for all $s>2$. If we show that $V_{\alpha,\beta}\geq 1$, then the theorem is proved. We therefore call, for any $L\in \mathcal{L}_2(1)\backslash \{\mathsf{A}_2 \}$,
\begin{equation}
\displaystyle \mathcal{Q}_{\alpha,\beta}(L):=\frac{\zeta_L(\alpha)-\zeta_{\mathsf{A}_2}(\alpha)}{\zeta_L(\beta)-\zeta_{\mathsf{A}_2}(\beta)},
\end{equation}
and $V_{\alpha,\beta}\geq 1$ is equivalent with
$$
\mathcal{Q}_{\alpha,\beta}(L)>\frac{\alpha}{\beta}, \quad \forall L\in  \mathcal{L}_2(1)\backslash \{\mathsf{A}_2 \}.
$$
More precisely, using the parametrization in the half-fundamental domain (see Proposition \ref{prop:D}), we naturally restrict our study to $L$ parametrized by $(x,y)\in \mathcal{D}$ defined by \eqref{def:D}. We therefore write $\mathcal{Q}_{\alpha,\beta}(x,y)$ instead of $\mathcal{Q}_{\alpha,\beta}(L)$ when $L$ has such parametrization and we aim to show, in three steps presented below, that
\begin{equation}\label{eq:Qxyineq}
\mathcal{Q}_{\alpha,\beta}(x,y)>\frac{\alpha}{\beta},\quad \forall (x,y)\in \mathcal{D}\backslash \left\{ \left(\frac{1}{2},\frac{\sqrt{3}}{2}\right)\right\}.
\end{equation}

\noindent \textbf{Step 1. Smoothness at $(x,y)=(1/2,\sqrt{3}/2)$.} We first show that $\mathcal{Q}_{\alpha,\beta}\in C^1(\mathcal{D})$. The smoothness on $\mathcal{D}\backslash \{(1/2,\sqrt{3}/2) \}$ is clear since $L\mapsto \zeta_L(s)$ is a smooth function on $\mathcal{D}$ when $s>2$. The continuity at $(1/2,\sqrt{3}/2)$ follows from the fact that $\mathcal{Q}_{\alpha,\beta}(x,y)$ converges to a positive constant as $(x,y)\to (1/2,\sqrt{3}/2)$ when $(x,y)\in \mathcal{D}\backslash \{(1/2,\sqrt{3}/2) \}$. Indeed, it has been proven by Coulangeon and Sch\"urmann in \cite{Coulangeon:2010uq} that $\mathsf{A}_2$ is a strict local minimum in $\mathcal{L}_2(1)$ of the Epstein zeta function $L\mapsto \zeta_L(s)$ for all $s>2$, which means by L'Hospital's Rule that
$$
\lim_{(x,y)\to (1/2,\sqrt{3}/2) \atop (x,y)\in \mathcal{D}\backslash \{(1/2,\sqrt{3}/2) \} } \mathcal{Q}_{\alpha,\beta}(x,y)=\lim_{L\to \mathsf{A}_2 \atop L\in \mathcal{L}_2(1)\backslash \{ \mathsf{A}_2\}} \frac{\zeta_L(\alpha)-\zeta_{\mathsf{A}_2}(\alpha)}{\zeta_L(\beta)-\zeta_{\mathsf{A}_2}(\beta)}>0,
$$
since both numerator and denominator are positive. Furthermore, for any $z\in \{x,y\}$ and any $L$ represented by $(x,y)\in \mathcal{D}$ , we have that
\begin{align}
\partial_z \mathcal{Q}_{\alpha,\beta} (x,y)=
%\frac{\partial_z \zeta_L(\alpha) \left( \zeta_L(\beta)-\zeta_{\mathsf{A}_2}(\beta) \right)-\partial_z \zeta_L(\beta) \left( \zeta_L(\alpha)-\zeta_{\mathsf{A}_2}(\alpha) \right)}{\left( \zeta_L(\beta)-\zeta_{\mathsf{A}_2}(\beta)\right)^2}\\
\left(\frac{\zeta_L(\alpha)-\zeta_{\mathsf{A}_2}(\alpha) }{\zeta_L(\beta)-\zeta_{\mathsf{A}_2}(\beta)}\right)\left(\frac{\partial_z \zeta_L(\alpha)}{\zeta_L(\alpha)-\zeta_{\mathsf{A}_2}(\alpha)} - \frac{\partial_z \zeta_L(\beta)}{\zeta_L(\beta)-\zeta_{\mathsf{A}_2}(\beta)} \right) \label{eq:partialzQ}
%\frac{\partial_z \zeta_L(\alpha)}{\zeta_L(\beta)-\zeta_{\mathsf{A}_2}(\beta)}- \left(\frac{\partial_z \zeta_L(\beta)}{\zeta_L(\beta)-\zeta_{\mathsf{A}_2}(\beta)}\right)\left(\frac{\zeta_L(\alpha)-\zeta_{\mathsf{A}_2}(\alpha) }{\zeta_L(\beta)-\zeta_{\mathsf{A}_2}(\beta)}\right). 
\end{align}
As seen above, the first factor of this expression is going to a positive constant. By L'Hospital's Rule again, the second factor goes to the difference of quotients of the $z$-derivatives of the quadratic form associate to the 2nd order derivative of $L\mapsto \zeta_L(s)$ at $L=\mathsf{A}_2$, $s\in \{\alpha,\beta\}$, divided by the same quadratic form at the same point. These quotients are both finite by smoothness of $L\mapsto \zeta_L(s)$ and by the strict minimality of $\mathsf{A}_2$ for these Epstein zeta functions (see  \cite{CoulSchurm2018} and \cite[Eq. (6.6)]{OptinonCM} for an analogous computation involving Epstein zeta functions). We therefore obtain that
%
%
%By L'Hospital's Rule applied for each terms of the above expression combined with the strict local optimality in $\mathcal{D}$ of $(1/2,\sqrt{3}/2)$ for the Epstein zeta function, we again get that
$$
\forall z\in\{x,y\},\quad \lim_{(x,y)\to (1/2,\sqrt{3}/2) \atop (x,y)\in \mathcal{D}\backslash \{(1/2,\sqrt{3}/2) \}} \partial_z \mathcal{Q}_{\alpha,\beta} (x,y) < \infty.
$$
We therefore obtain that $\nabla_{(x,y)} \mathcal{Q}_{\alpha,\beta}(x,y)$ is continuous on $\mathcal{D}$, which means that $\mathcal{Q}_{\alpha,\beta}\in C^1(\mathcal{D})$.

\medskip
\medskip

\noindent  \textbf{Step 2. Restriction to a compact set.} Since $\mathcal{D}$ is an infinite set (in the $y$-direction), we cannot check \eqref{eq:Qxyineq} numerically. However, since, for any $(x,y)\in \mathcal{D}$,
$$
\mathcal{Q}_{\alpha,\beta}(x,y)=\frac{\displaystyle y^{\frac{\alpha}{2}}\sideset{}{'} \sum_{m,n} \frac{1}{\left((m+xn)^2+ y^2 n^2 \right)^{\frac{\alpha}{2}}} - \zeta_{\mathsf{A}_2}(\alpha)}{\displaystyle y^{\frac{\beta}{2}} \sideset{}{'} \sum_{m,n} \frac{1}{\left((m+xn)^2+ y^2 n^2 \right)^{\frac{\beta}{2}}} - \zeta_{\mathsf{A}_2}(\beta)},
$$
it is straigthforward that $\lim_{y\to +\infty} \mathcal{Q}_{\alpha,\beta}(x,y)=+\infty$, which means that \eqref{eq:Qxyineq} is satisfied at least for $y$ large enough and any $x\in [0,1/2]$. More precisely, we have the explicit value $y_{\alpha,\beta}$ for such $y$ given by the following lemma whose proof is postponed to Section \ref{subsec:prooflemmas} to increase the readability of this section.

\begin{lemma}\label{lem1}
For any $\alpha>\beta>2$, we have that, for all $(x,y)\in \mathcal{D}$,
$$
y> y_{\alpha,\beta} \Rightarrow \mathcal{Q}_{\alpha,\beta}(x,y)>\frac{\alpha}{\beta},
$$
where
$$
y_{\alpha,\beta}:=
\left\{
\begin{array}{ll} 
\displaystyle \frac{9}{2^{1+\frac{2}{\beta}}\zeta(2\beta)^{\frac{2}{\beta}}}\left( \zeta_{\Z^2}(\beta) + \sqrt{\zeta_{\Z^2}(\beta)^2 - \frac{2}{3^{\beta}}\zeta(2\beta)\left( 2\zeta_{\mathsf{A}_2}(\beta)-\zeta_{\mathsf{A}_2}(2\beta) \right)} \right)^{\frac{2}{\beta}} & \mbox{ if $\alpha=2\beta$,}\\ 
\displaystyle\frac{3^{\frac{\alpha}{\alpha-\beta}}}{2^{1+\frac{2}{\alpha-\beta}}}\left( \frac{\alpha\zeta_{\Z^2}(\beta)}{\beta \zeta(\alpha)} \right)^{\frac{2}{\alpha-\beta}} & \mbox{otherwise,}
\end{array}
\right.
$$
and where $\zeta(s):=\displaystyle \sum_{m\in \N} \frac{1}{m^s}$ is the Riemann zeta function. 
\end{lemma}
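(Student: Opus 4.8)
The plan is to convert the target inequality $\mathcal{Q}_{\alpha,\beta}(x,y)>\alpha/\beta$ into an elementary inequality in the single variable $y$, by bounding the numerator of $\mathcal Q_{\alpha,\beta}$ from below and its denominator from above by explicit multiples of $y^{\alpha/2}$ and $y^{\beta/2}$. First I would record that, by Theorem \ref{thm:mont}, both $\zeta_L(\alpha)-\zeta_{\mathsf A_2}(\alpha)$ and $\zeta_L(\beta)-\zeta_{\mathsf A_2}(\beta)$ are strictly positive for $L\neq\mathsf A_2$, so clearing the positive denominator shows that $\mathcal Q_{\alpha,\beta}(x,y)>\frac\alpha\beta$ is equivalent to
$$
\zeta_L(\alpha)-\frac\alpha\beta\zeta_L(\beta)>\zeta_{\mathsf A_2}(\alpha)-\frac\alpha\beta\zeta_{\mathsf A_2}(\beta).
$$
The right-hand side is a fixed \emph{negative} constant: since every nonzero vector of the unit-density $\mathsf A_2$ has length strictly larger than $1$, one has $\zeta_{\mathsf A_2}(\alpha)<\zeta_{\mathsf A_2}(\beta)$, whence $\zeta_{\mathsf A_2}(\alpha)-\frac\alpha\beta\zeta_{\mathsf A_2}(\beta)<(1-\frac\alpha\beta)\zeta_{\mathsf A_2}(\beta)<0$ (equivalently, this constant is $E_f[\mathsf A_2]<0$ for $(a,b)=(1,\alpha/\beta)$, cf. Theorem \ref{thm:LJ}).

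Second, using the parametrization of Proposition \ref{prop:D}, where $Q_L(m,n)=\frac1y(m+xn)^2+yn^2$, I would estimate the two Epstein zeta functions separately on $\mathcal D$. For the lower bound I keep only the \emph{axial} terms $n=0$, for which $Q_L(m,0)=m^2/y$ exactly, giving $\zeta_L(\alpha)\ge y^{\alpha/2}\sum_{m\neq0}|m|^{-\alpha}=2\zeta(\alpha)\,y^{\alpha/2}$; these axial points are the short vectors of the thin lattice $L$ and carry the dominant $y^{\alpha/2}$ growth as $y\to\infty$. For the upper bound I compare $Q_L$ with $m^2+n^2$: since $\det Q_L=1$ and $\mathrm{Tr}\,Q_L=\frac{1+x^2}{y}+y$, the smallest eigenvalue of $Q_L$ satisfies $\lambda_-\ge\frac{y}{1+x^2+y^2}\ge\frac{1}{2y}$ as soon as $y^2\ge1+x^2$, which holds on the whole range $y>y_{\alpha,\beta}$ (one checks $y_{\alpha,\beta}\ge\sqrt5/2$). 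Hence $Q_L(m,n)\ge\frac{1}{2y}(m^2+n^2)$ and $\zeta_L(\beta)\le(2y)^{\beta/2}\zeta_{\Z^2}(\beta)$. This smallest-eigenvalue estimate, uniform in $x\in[0,1/2]$, is the crux of the argument and the step I expect to be the main obstacle, since it must simultaneously be explicit, valid on the relevant tail of $\mathcal D$, and clean enough for the resulting inequality to be solved for $y$ in closed form.

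Finally I would substitute the two estimates into the reduced inequality and distinguish the two regimes of the statement. When $\alpha\neq2\beta$, the negative constant on the right may be discarded (this only strengthens what must be shown), so it suffices to impose $2\zeta(\alpha)\,y^{\alpha/2}\ge\frac\alpha\beta\,2^{\beta/2}\zeta_{\Z^2}(\beta)\,y^{\beta/2}$; this is a single power inequality $y^{(\alpha-\beta)/2}\ge\mathrm{const}$ whose solution is precisely the closed-form $y_{\alpha,\beta}$ of the \emph{otherwise} branch (the explicit numerical constants, in particular the powers of $3$, being the ones produced by the chosen estimates). When $\alpha=2\beta$ one has $y^{\alpha/2}=(y^{\beta/2})^2$, so with $t=y^{\beta/2}$ the reduced inequality becomes a genuine quadratic $A\,t^2-B\,t+C>0$ with $A,B>0$ proportional to $\zeta(2\beta)$ and $\zeta_{\Z^2}(\beta)$ and $C=2\zeta_{\mathsf A_2}(\beta)-\zeta_{\mathsf A_2}(2\beta)>0$; here it pays to \emph{retain} the constant rather than discard it, because the inequality is already quadratic and therefore still solvable in closed form, and the threshold is the larger root given by the quadratic formula, which explains the square root and the $\zeta_{\mathsf A_2}(\beta),\zeta_{\mathsf A_2}(2\beta)$ terms in the first branch of $y_{\alpha,\beta}$. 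The remaining work is bookkeeping: tracking the explicit constants through the two estimates, and verifying the elementary inequality $y^2\ge1+x^2$ required by the eigenvalue bound.
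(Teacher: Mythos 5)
Your proposal follows essentially the same route as the paper's proof: bound the numerator of $\mathcal{Q}_{\alpha,\beta}$ from below by an axial ($n=0$) contribution growing like $y^{\alpha/2}$, bound the denominator from above via $Q_L(m,n)\geq \frac{m^2+n^2}{2y}$, prove that the constant $\frac{\alpha}{\beta}\zeta_{\mathsf{A}_2}(\beta)-\zeta_{\mathsf{A}_2}(\alpha)$ is positive (using, as you do, that every nonzero vector of $\mathsf{A}_2$ has length $>1$), then drop that constant when $\alpha\neq 2\beta$ and keep it and apply the quadratic formula in $y^{\beta/2}$ when $\alpha=2\beta$. However, your claim that this yields ``precisely'' the stated $y_{\alpha,\beta}$ is false, and this is the step you must repair. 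The powers of $3$ in the statement come from the paper's \emph{coarser} numerator bound: the paper first majorizes $(m+xn)^2+y^2n^2$ by $\frac{3}{2}m^2+\left(\frac{3}{4}+y^2\right)n^2$ and only then keeps the $n=0$ terms, obtaining $\frac{2^{1+\alpha/2}}{3^{\alpha/2}}\zeta(\alpha)\,y^{\alpha/2}$, whereas your direct axial bound gives the larger constant $2\zeta(\alpha)\,y^{\alpha/2}$. Consequently the threshold your estimates actually produce is \emph{smaller} than the stated one, by the factor $(2/3)^{\alpha/(\alpha-\beta)}$ in the generic branch (and the quadratic coefficients differ likewise when $\alpha=2\beta$). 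This error is in the favorable direction---a smaller threshold is a stronger claim---so the lemma still follows, but only after you add the comparison between your derived threshold and the stated $y_{\alpha,\beta}$; as written, your proof replaces that comparison by an incorrect identification.

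The second gap is the side condition in your denominator bound: the eigenvalue estimate $\lambda_-\geq\frac{1}{2y}$ needs $y^2\geq 1+x^2$, and you only assert that ``one checks $y_{\alpha,\beta}\geq\sqrt{5}/2$''. This is true but must actually be verified, e.g. by rewriting the generic branch as $y_{\alpha,\beta}=\frac{3}{2}\bigl(\frac{3^\beta}{4}\cdot\frac{\alpha^2\zeta_{\Z^2}(\beta)^2}{\beta^2\zeta(\alpha)^2}\bigr)^{1/(\alpha-\beta)}\geq\frac{3}{2}$, since $3^\beta>4$, $\alpha>\beta$ and $\zeta_{\Z^2}(\beta)>4>\zeta(\alpha)$; similarly $y_{\alpha,\beta}\geq\frac{9}{2}$ when $\alpha=2\beta$. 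Moreover, the verification must be done for the \emph{stated} threshold and not for the smaller one your estimates give: when $\alpha\gg\beta$ your own threshold tends to $1<\sqrt{5}/2$, so your conditional bound could not prove the ``stronger'' lemma with your threshold, only the stated one. The paper sidesteps this condition entirely: Young's inequality together with $x\in[0,1/2]$ and $x^2+y^2\geq 1$ gives $(m+xn)^2+y^2n^2\geq\frac{m^2+n^2}{2}$ on all of $\mathcal{D}$, which is exactly your bound with no restriction on $y$. With these two repairs your argument is complete and coincides in substance with the paper's.
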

\begin{remark}[Difference between the two values of $y_{\alpha,\beta}$]
The second expression giving $y_{\alpha,\beta}$ still holds for $\alpha=2\beta$, but is larger. Actually, for $(\alpha,\beta)=(12,6)$, the two above expressions are quite similar (the difference is of order $10^{-3}$), but it has to be noticed that the difference increases with $\beta$. That is why we have derived a slightly better formula for $\alpha=2\beta$, more precisely for saving some computational time in Step 3.
\end{remark}
\noindent We therefore write
$$
\mathcal{K}_{\alpha,\beta}:= \{ (x,y)\in \R^2 : x\in [0,1/2], y\in [\sqrt{3}/2, y_{\alpha,\beta}], x^2+y^2\geq 1\},
%\quad \textnormal{where}\quad y_{\alpha,\beta}:= 2^{\frac{\alpha+\beta-2}{\alpha-\beta}}\left( \frac{\alpha\zeta_{\Z^2}(\beta)}{\beta \zeta(\alpha)} \right)^{\frac{2}{\alpha-\beta}},
$$
in such a way that \eqref{eq:Qxyineq} holds in $\mathcal{D}\backslash \mathcal{K}_{\alpha,\beta}$.

\medskip
\medskip

\noindent  \textbf{Step 3. Numerical checking of \eqref{eq:Qxyineq} in $\mathcal{K}_{\alpha,\beta}$.} It remains to show that \eqref{eq:Qxyineq} holds in $\mathcal{K}_{\alpha,\beta}$. For that, since we are only summing a finite number of terms $\zeta_L^N(s)$ (see \eqref{def:partialEpstein}) of the Epstein zeta functions where $s\in \{\alpha,\beta\}$, we need to know how to control the value of the rest. The quantities summed in the Epstein zeta functions we are considering are converging fast to zero and the error terms $\zeta_L(s)-\zeta_L^N(s)$ have been studied for instance in \cite{CrandallFastEval} where an algorithm is written to compute $\zeta_L(s)$ with any degree of accuracy. It is also clear that the larger $y$ is, the more terms we need (i.e. more $N$ should be large) since there are more points at small distance from each others in $L$. The same can be done for the Riemann zeta function $\zeta(\alpha)$, for which it is well-known that the error term is bounded by $\frac{\alpha}{\alpha-1}N^{1-\alpha}$ (see e.g. \cite[p. 3]{RiemannZeta}). Since $y_{\alpha,\beta}$ is written in terms of Epstein and Riemann zeta functions, it is therefore easy to compute a new approximating value $\overline{y_{\alpha,\beta}}$ such that, for chosen $k\in \N$,
$$
y_{\alpha,\beta}\leq \overline{y_{\alpha,\beta}} \quad \textnormal{and}\quad 10^k\overline{y_{\alpha,\beta}}\in \N.
$$
This upper bound $\overline{y_{\alpha,\beta}}$, and more precisely the integer $k$, can be changed according to the accuracy we use in the sequel. We now write 
\begin{equation}\label{eq:Kbar}
\overline{\mathcal{K}_{\alpha,\beta}}=\{ (x,y)\in \R^2 : x\in [0,1/2], y\in [\sqrt{3}/2, \overline{y_{\alpha,\beta}}], x^2+y^2\geq 1\}
\end{equation}
and we want to show that \eqref{eq:Qxyineq} holds in $\overline{\mathcal{K}_{\alpha,\beta}}\supset \mathcal{K}_{\alpha,\beta}$.

\medskip

Our method relies on the fact that one can only compute a finite number of values $\{\mathcal{Q}_{\alpha,\beta}(x_i,y_j)\}_{i,j}$ where $(x_i,y_j)\in \overline{\mathcal{K}_{\alpha,\beta}}$ in order to approximate $\min_{(x,y)\in \overline{\mathcal{K}_{\alpha,\beta}}} \mathcal{Q}_{\alpha,\beta}(x,y)$ with a sufficient degree of accuracy. Therefore, the fact that $\min_{i,j} \mathcal{Q}_{\alpha,\beta}(x_i,y_j)>\frac{\alpha}{\beta}$ will imply \eqref{eq:Qxyineq}. We actually create a grid of $G_{\alpha,\beta}^\delta$ of $I\times J$ points $(x_i,y_j)\in \overline{\mathcal{K}_{\alpha,\beta}}$, as illustrated in Figure \ref{fig:ex}, such that:
\begin{itemize}
\item $x_1=0<x_2<...<x_I=1/2$ and  $y_1\approx\frac{\sqrt{3}}{2}<y_2<...<y_J=\overline{y_{\alpha,\beta}}$,
\item $ x_{i+1}-x_i=y_{j+1}-y_j=\delta= \frac{1}{2I}=\frac{\overline{y_{\alpha,\beta}}-y_1}{J}$ for all $i\in \{1,...,I-1\}$, for all $j\in \{1,...,J-1\}$,
\end{itemize}
and where $y_1$ is an approximation of $\frac{\sqrt{3}}{2}$ by above with the accuracy given by $\delta$, i.e. $|y_1-\sqrt{3}/2|<\delta$. We also choose $N$ such that $\zeta_L^N(s)$ is well approximated where $L$ is parametrized by any $(x,\overline{y_{\alpha,\beta}})$, $x\in [0,1/2]$, in such a way that any other lattice parametrized by $(x,y)\in \overline{\mathcal{K}_{\alpha,\beta}}$ is also well approximated.

\begin{figure}[!h]
\centering
\includegraphics[width=45mm]{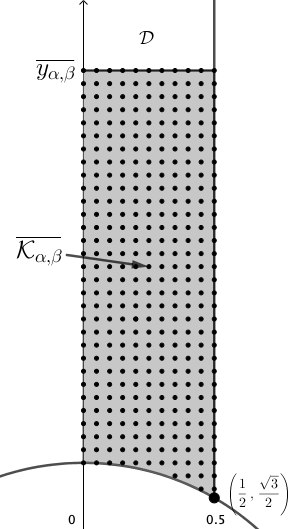}
\caption{Illustration of the fundamental domain $\mathcal{D}$ given by \eqref{def:D}, the compact set $\overline{\mathcal{K}_{\alpha,\beta}}\subset\mathcal{D}$ (in grey), the threshold value $\overline{y_{\alpha,\beta}}$ approximating by above $y_{\alpha,\beta}$ given by Lemma \ref{lem1}, the point $\left(\frac{1}{2},\frac{\sqrt{3}}{2}  \right)$ representing the triangular lattice $\mathsf{A}_2$ as well as a possible square grid $G_{\alpha,\beta}^\delta=\{(x_i,y_j)\}_{i,j}$ with increment $\delta$ where we compute the values of $\mathcal{Q}_{\alpha,\beta}$.}
\label{fig:ex}
\end{figure} 

\medskip

Since $\mathcal{Q}_{\alpha,\beta}\in C^1(\overline{\mathcal{K}_{\alpha,\beta}})$, it follows that $\mathcal{Q}_{\alpha,\beta}$ is a Lipschitz function in $\overline{\mathcal{K}_{\alpha,\beta}}$, i.e. there exists a positive constant $M_{\alpha,\beta}$ such that, for all $(x,y),(\bar{x},\bar{y})\in \overline{\mathcal{K}_{\alpha,\beta}}$,
\begin{equation}
| \mathcal{Q}_{\alpha,\beta}(x,y)- \mathcal{Q}_{\alpha,\beta}(\bar{x},\bar{y})|\leq M_{\alpha,\beta} |(x,y)-(\bar{x},\bar{y})|,
\end{equation}
which implies in particular that, for all $(x,y)\in \overline{\mathcal{K}_{\alpha,\beta}}$ and all $(x_i,y_j)\in G_{\alpha,\beta}^\delta$,
$$
\mathcal{Q}_{\alpha,\beta}(x,y)\geq \mathcal{Q}_{\alpha,\beta}(x_i,y_j)-M_{\alpha,\beta} |(x,y)-(x_i,y_j)|.
$$
This means that the latter inequality holds for $(x,y)$ in any square centred at $(x_i,y_j)$ with sidelength $\delta$ parallel to the axis. In that case, we have $|(x,y)-(x_i,y_j)|<\frac{\sqrt{2}}{2}\delta$. We now choose $\delta$ such that $M_{\alpha,\beta} \delta\frac{\sqrt{2}}{2}$ is small enough in such a way that we can numerically check that
$$
\min_{(x_i,y_j)\in G_{\alpha,\beta}^\delta}  \mathcal{Q}_{\alpha,\beta}(x_i,y_j)-M_{\alpha,\beta}\delta \frac{\sqrt{2}}{2}>\frac{\alpha}{\beta}.
$$
Therefore, one can be certain that \eqref{eq:Qxyineq} holds in $\overline{\mathcal{K}_{\alpha,\beta}}$ and the proof is completed.

%Therefore, the quantity 
%\begin{equation}\label{eq:DeltaQINab}
%\Delta^{\delta,N} \mathcal{Q}_{\alpha,\beta}:=\max_{1\leq i\leq I \atop 1\leq j\leq J} \{ |\mathcal{Q}_{\alpha,\beta}(x_i,y_j)-\mathcal{Q}_{\alpha,\beta}(x_{i+1},y_{j})|, |\mathcal{Q}_{\alpha,\beta}(x_i,y_j)-\mathcal{Q}_{\alpha,\beta}(x_{i},y_{j+1})|\},
%\end{equation}
%where the maximum is taken among all possible values computed on the grid $G_{\alpha,\beta}^\delta$, can be made as small as we need by simply decreasing $\delta$, i.e. increasing $I$ and $J$. Therefore, it is easy to reach any degree of accuracy we want  in such a way that
%\begin{equation}\label{eq:mijab}
% \min_{(x_i,y_j)\in G_{\alpha,\beta}^\delta} \mathcal{Q}_{\alpha,\beta}(x_i,y_j) - \frac{\alpha}{\beta}>m_{\alpha,\beta}^{\delta,N}>\Delta^{\delta,N} \mathcal{Q}_{\alpha,\beta}>0.
%\end{equation}
% If it is not the case, we decrease again $\delta$. Therefore, one can be certain that \eqref{eq:Qxyineq} holds in $\overline{\mathcal{K}_{\alpha,\beta}}$ and the proof is completed.

\medskip

\textit{Computation of $M_{\alpha,\beta}$.} We do not give here any explicit formula for such $M_{\alpha,\beta}$ which covers all the possible case, but the following computation yields to a simple estimate given below in \eqref{def:Malphabeta}. We begins by writing, for $s\in \{\alpha,\beta\}$,
$$
\Delta_s(x,y):=\zeta_L(s)-\zeta_{\mathsf{A}_2}(s).
$$
Therefore $\mathcal{Q}_{\alpha,\beta}(x,y)=\frac{\Delta_\alpha(x,y)}{\Delta_\beta(x,y)}$ and we have $\forall (x,y)\in \overline{\mathcal{K}_{\alpha,\beta}}$, using \eqref{eq:partialzQ},
%\begin{align*}
%& \| \nabla_{(x,y)} \mathcal{Q}_{\alpha,\beta}\|^2 \\
%&= \left(
% \partial_x \frac{\Delta_\alpha(x,y)}{\Delta_\beta(x,y)} \right)^2 + \left( \partial_y \frac{\Delta_\alpha(x,y)}{\Delta_\beta(x,y)}\right)^2\\
% &=\left( \frac{\Delta_\beta(x,y)\partial_x \Delta_\alpha(x,y)-\Delta_\alpha(x,y)\partial_x \Delta_\beta(x,y)}{\Delta_\beta^2(x,y)} \right)^2 +\left( \frac{\Delta_\beta(x,y)\partial_y \Delta_\alpha(x,y)-\Delta_\alpha(x,y)\partial_y \Delta_\beta(x,y)}{\Delta_\beta^2(x,y)} \right)^2\\
% &= \left( \frac{\Delta_\beta(x,y) \partial_x \zeta_L(\alpha)- \Delta_\alpha(x,y) \partial_x \zeta_L(\beta)}{\Delta_\beta^2(x,y)} \right)^2+\left( \frac{\Delta_\beta(x,y) \partial_y \zeta_L(\alpha)- \Delta_\alpha(x,y) \partial_y \zeta_L(\beta)}{\Delta_\beta^2(x,y)} \right)^2\\
% &=\mathcal{Q}_{\alpha,\beta}(x,y)^2\left[\left(\frac{\partial_x\zeta_L(\alpha)}{\Delta_\alpha(x,y)}- \frac{\partial_x\zeta_L(\beta)}{\Delta_\beta(x,y)} \right)^2 +\left(\frac{\partial_y\zeta_L(\alpha)}{\Delta_\alpha(x,y)}- \frac{\partial_y\zeta_L(\beta)}{\Delta_\beta(x,y)} \right)^2 \right]\\
% &= \mathcal{Q}_{\alpha,\beta}(x,y)^2\left\|\frac{\nabla_{L} \zeta_L(\alpha)}{\Delta_\alpha(x,y)} - \frac{\nabla_{L} \zeta_L(\beta)}{\Delta_\beta(x,y)} \right\|^2
%% &=\mathcal{Q}_{\alpha,\beta}(x,y)^2 \left\| \nabla_{(x,y)} \log\left(\mathcal{Q}_{\alpha,\beta}(x,y) \right)  \right\|^2
%\end{align*}
%It follows that
\begin{align*}
\| \nabla_{(x,y)} \mathcal{Q}_{\alpha,\beta}\|
%\mathcal{Q}_{\alpha,\beta}(x,y)\left\|\frac{\nabla_{L} \zeta_L(\alpha)}{\Delta_\alpha(x,y)} - \frac{\nabla_{L} \zeta_L(\beta)}{\Delta_\beta(x,y)} \right\|\\
\leq \mathcal{Q}_{\alpha,\beta}(x,y)\left\|\frac{\nabla_{L} \zeta_L(\alpha)}{\Delta_\alpha(x,y)}\right\| + \mathcal{Q}_{\alpha,\beta}(x,y)\left\|\frac{\nabla_{L} \zeta_L(\beta)}{\Delta_\beta(x,y)}\right\|.
\end{align*}
Moreover, we recall that $\Delta_s(x,y)\geq 0$ vanishes only for $(x,y)=(1/2,\sqrt{3}/2)$ (by uniqueness of the triangular minimizer) and $\nabla_L \zeta_L(s)$ vanishes only for $(x,y)\in \{(0,1), (1/2,\sqrt{3}/2)\}$ (the only critical points of the Epstein zeta function are the square and the triangular lattices).

\medskip

We can now bound above $\mathcal{Q}_{\alpha,\beta}(x,y)$ and $\left\|\frac{\nabla_{L} \zeta_L(s)}{\Delta_s(x,y)}\right\|$ for $(x,y)\in \overline{\mathcal{K}_{\alpha,\beta}}$  in the same way by splitting $\overline{\mathcal{K}_{\alpha,\beta}}$ into two parts:
\begin{itemize}
\item \textit{in a ball $B_\varepsilon(\mathsf{A}_2)$ (for the Euclidean norm) centred in $\mathsf{A}_2$ and with radius $\varepsilon$ (that we will fix afterwards)}. A Taylor expansion gives us here a simple upper bound  for $\left\|\frac{\nabla_{L} \zeta_L(s)}{\Delta_s(x,y)}\right\|$ in terms of the third (for the numerator) and the second derivative (for the denominator) of $L\mapsto \zeta_L(s)$ that can be easily bounded in the ball by iteration of the formula given below for $\| \nabla_{L} \zeta_L(s)\|$. The same Taylor expansion approach is used to bound above $\mathcal{Q}_{\alpha,\beta}$.
\item \textit{outside $B_\varepsilon(\mathsf{A}_2)$.} We roughly estimate the quotients by using the fact that, for $s>2$, we have, using the fact that $x\in [0,1/2]$ and $y\leq \overline{y_{\alpha,\beta}}$,
\begin{align*}
\| \nabla_{L} \zeta_L(s)\|^2&=\left(\partial_x \zeta_L(s)\right)^2 +\left( \partial_y\zeta_L(s)\right)^2\\
&=\left(\partial_x \left\{ y^{\frac{s}{2}}\sum_{m,n} \frac{1}{\left((m+xn)^2+y^2n^2 \right)^{\frac{s}{2}}} \right\} \right)^2 +\left(\partial_y \left\{y^{\frac{s}{2}}\sum_{m,n} \frac{1}{\left((m+xn)^2+y^2n^2 \right)^{\frac{s}{2}}} \right\} \right)^2\\
&=\left( -\frac{s}{2} y^{\frac{s}{2}} \sum_{m,n}\frac{2n(m+xn)}{\left((m+xn)^2+y^2n^2 \right)^{\frac{s}{2}+1}} \right)^2+ \left(\frac{s}{2}y^{\frac{s}{2}-1}\sum_{m,n} \frac{(m+xn)^2-y^2n^2}{\left((m+xn)^2+y^2n^2 \right)^{\frac{s}{2}+1}} \right)^2\\
&=\frac{s^2}{4}y^{s-2}\left\{ y^2\left(\sum_{m,n}\frac{2n(m+xn)}{\left((m+xn)^2+y^2n^2 \right)^{\frac{s}{2}+1}}\right)^2 + \left( \sum_{m,n} \frac{(m+xn)^2-y^2n^2}{\left((m+xn)^2+y^2n^2 \right)^{\frac{s}{2}+1}}\right)^2  \right\}\\
&\leq 2^s s^2\overline{y_{\alpha,\beta}}^{s-2}\left\{ \overline{y_{\alpha,\beta}}^2\left(\sum_{m,n}\frac{2|n|(|m|+0.5|n|)}{\left(m^2+n^2\right)^{\frac{s}{2}+1}}\right)^2 + \left( \sum_{m,n} \frac{m^2+0.25 n^2+|m||n|}{\left(m^2+n^2 \right)^{\frac{s}{2}+1}}\right)^2  \right\}\\
&=:S_{\alpha,\beta}(s),
\end{align*}
and using a simple lower/upper bounds for $\Delta_s(x,y)$ -- depending on $\varepsilon$ for the lower bound. In particular, we easily have, by growth of $L\mapsto \zeta_L(s)$ in the $y$-direction and its degrowth in the $x$-direction (see \cite{Mont}), for any $(x,y)\in\overline{\mathcal{K}_{\alpha,\beta}}$,
$$
\Delta_\alpha(x,y)=\zeta_L(\alpha)-\zeta_{\mathsf{A}_2}(\alpha)\leq \zeta_{\overline{L}}(\alpha),
$$
where $\overline{L}$ is parametrized by the point $(0,\overline{y_{\alpha,\beta}})\in \mathcal{D}$.
\end{itemize}
Finally, choosing $\varepsilon$ such that $\Delta_s(x,y)>1$ -- which is straightforward from the variation of $L\mapsto \zeta_L(s)$ (decreasing in the $x$-direction and increasing in the $y$-direction as shown in \cite{Mont}) --  allows, using estimates on the second derivative of $L\mapsto \zeta_L(s)$, to find a constant $M_{\alpha,\beta}$ that we approximate in order to make it numerically tractable. We observe that this constant is given by the maximum between the values we found in the neighborhood of $\mathsf{A}_2$ and basically the values in the upper boundary of $\overline{\mathcal{K}_{\alpha,\beta}}$. For the examples we cover in this article, we found that
\begin{equation}\label{def:Malphabeta}
M_{\alpha,\beta}=\zeta_{\overline{L}}(\alpha) \left(S_{\alpha,\beta}(\alpha)+ S_{\alpha,\beta}(\beta)\right).
\end{equation}

%
%it is completely straightfoward to get an approximate value of the maximum of $|\nabla_{(x,y)} \mathcal{Q}_{\alpha,\beta}(x,y)|$ in $\overline{\mathcal{K}_{\alpha,\beta}}$ using a gradient descent method and choosing a very rough upper bound for avoiding any approximation issue. Otherwise, one can always adapt the value of $\delta$ by hands (i.e. the numbers $I,J$) in order to reach the appropriate accuracy.

\begin{remark}[\textbf{List of the needed approximations - A summary}]
To summarize, we actually need to estimate the following quantities in order to complete our Step 3:
\begin{itemize}
\item \textbf{the Epstein zeta functions $\zeta_L(\alpha), \zeta_L(\beta)$},  that are approximated by $\zeta_L^N(\alpha),\zeta_L^N(\alpha)$ given in \eqref{def:partialEpstein}, and for which a bound of the error terms is known (see e.g. \cite{CrandallFastEval}). Notice that, since we are considering only sufficiently large exponents that are not treated in \cite{BetTheta15} (for which we already know that Theorem \ref{mainthm} holds), all our sums are converging very fast to zero.
\item \textbf{the value of $\overline{y_{\alpha,\beta}}$} based on the real $y_{\alpha,\beta}$ given by Lemma \ref{lem1} which is written in terms of
\begin{itemize}
\item the Epstein zeta function (see previous point),
\item the Riemann zeta function $\zeta(\alpha)$ for which the error term is bounded by $\frac{\alpha}{\alpha-1}N^{1-\alpha}$ is well-known (see e.g. \cite[p. 3]{RiemannZeta}).
\end{itemize}
\item \textbf{the upper bound $M_{\alpha,\beta}$} of $|\nabla_{(x,y)} \mathcal{Q}_{\alpha,\beta}(x,y)|$ in the compact set $\overline{\mathcal{K}_{\alpha,\beta}}$ -- for which we need a very rough estimate as given by \eqref{def:Malphabeta} -- that depends on
\begin{itemize}
\item the values of $S_{\alpha,\beta}(s)$ for $s\in \{\alpha,\beta\}$ that we estimate by its first terms when $|m|\leq N$ and $|n|\leq N$, as we do for $\zeta_{\overline{L}}(\alpha)$ by $\zeta^N_{\overline{L}}(\alpha)$.
\item the value of $\overline{y_{\alpha,\beta}}$ (see previous point).
\item in the following computations, we will round the approximation to the upper integer part. It does not really matter since we therefore choose $\delta$ with respect to $M_{\alpha,\beta}$ and the computation time stays reasonable.
\end{itemize}
\end{itemize}
Therefore, the fact that all the expressions we are summing or integrating are going to zero very fast allows us to choose some reasonable numbers $N$ (in the sense of ``not too large") and $\delta$ (in the sense of ``not too small") for computing our quantities and be sure that the inequality $\mathcal{Q}_{\alpha,\beta}(x,y)>\frac{\alpha}{\beta}$ holds in $\overline{\mathcal{K}_{\alpha,\beta}}$ whereas it is established in Lemma \ref{lem1} that the inequality holds in $\mathcal{D}\backslash \overline{\mathcal{K}_{\alpha,\beta}}$.
\end{remark}

\begin{remark}[\textbf{Possible improvements of our method}]
We obviously do not claim that our estimates are the best one. The bound $y_{\alpha,\beta}$ given in Lemma \ref{lem1} might be improved and a precise value of $M_{\alpha,\beta}$ could be computed by using more optimal bounds on $Q_L(m,n)$. However, in dimension $d=2$, we can easily compute a large number of values such that the total computational time is less than an hour. This is already quite long, but rather far from the 19 hours needed in Sarnak-Strombergsson's method \cite{SarStromb}. Of course, a three-dimensional adaption (if any) of such method should lead to better estimates in order to considerably reduce this computational time.
\end{remark}

\begin{remark}[\textbf{Non-adaptability to other potentials and in other dimensions}]\label{rmk:nonadapt}
It has to be noticed that our method strongly relies on the homogeneity of the Epstein zeta function (Lemma \ref{lem:scale}) as well as the fact that $f$ is a one-well potential which implies the bound \eqref{eq:UBV}. It is therefore unclear whether our method can be adapted to other interactions such as Morse type potentials. Furthermore, the optimality of the triangular lattice for the Epstein zeta function given in Theorem \ref{thm:mont} is also a key point in our method. Such optimality result is only available in dimensions $d\in \{2,8,24\}$ (see \cite{CKMRV2Theta}) but nothing has been shown in dimension $d=3$ so far (see e.g. \cite{SarStromb,Beterminlocal3d} or the numerical study in \cite{LBSTRieszLJ21}). This makes our method impossible to apply in dimension $d=3$ yet, whereas computational time is generally extremely long for such computer-assisted method in dimensions 8 and 24.
\end{remark}

\subsection{Application to the $(12,6)$ Lennard-Jones potential}\label{sec:126}

We now check these three steps for $(\alpha,\beta)=(12,6)$ in order to show Proposition \ref{prop:bound} which implies Theorem \ref{mainthm}. We use the software Scilab to perform our computations. Choosing $N=40$, we therefore find 
$$
y_{12,6}\leq \overline{y_{12,6}}=7.52,
$$
and $\zeta_L(6), \zeta_L(12)$ are accurate with an error of order $10^{-4}$ for $L$ parametrized by $(x,\overline{y_{12,6}})$. Using \eqref{def:Malphabeta}, we compute our Lipschitz constant (approximated by its upper integer part)
$$
M_{12,6}=181,
$$ and it is actually enough to compute our values with an increment of $\delta=10^{-2}$ (in that case we have $M_{12,6}\delta \sqrt{2}/2\approx 1.28$). Therefore, we split $[0,1/2]$ into $I=50$ equidistant values and $\left[ 0.87, 7.52\right]$, where $y_1=0.87\approx \frac{\sqrt{3}}{2}$ (with an accuracy of $\delta$), into $J=666$ equidistant values (both with increment $\delta$). We numerically checked that
%$$
%\Delta^{\delta, N} \mathcal{Q}_{12,6}=\max_{1\leq i\leq I \atop 1\leq j\leq J} \{ |\mathcal{Q}_{12,6}(x_i,y_j)-\mathcal{Q}_{12,6}(x_{i+1},y_{j})|, |\mathcal{Q}_{12,6}(x_i,y_j)-\mathcal{Q}_{12,6}(x_{i},y_{j+1})|\}\leq 10^{-2},
%$$
%and
$$
 \min_{(x_i,y_j)\in G_{12,6}^\delta}  \mathcal{Q}_{12,6}(x_i,y_j) - M_{12,6}\delta \frac{\sqrt{2}}{2}>\frac{12}{6}=2
$$
which completes the proof of Proposition \ref{prop:bound} and Theorem \ref{mainthm}.

\subsection{Application to other exponents and justification of Conjecture \ref{conj}}\label{sec:otherexponents}

The global optimality of a triangular lattice for $E_f$ has been conjectured to hold for any pair of exponents $(\alpha,\beta)$ (see \cite{BetTheta15,Beterloc,OptinonCM,SamajTravenecLJ}) and we have used the above strategy for a lot of them, which supports this conjecture. In particular, we have checked that our proof works for 
\begin{equation}\label{eq:exp}
(\alpha,\beta)\in  \{(k,6) : k\in \{14,16,18,20,22,24\} \},
\end{equation}
i.e. when we keep the Van der Waals attraction $-br^{-6}$ whereas the repulsion parameters $\alpha$ varies. The range of exponents $\alpha$ is actually inspired by Kaplan's remark in \cite[p. 184]{Kaplan} about applicability of Lennard-Jones type potentials in physics and the fact that we kept the same value for $\beta$ is therefore only motivated by Physical arguments. The reader can also easily check the applicability of our method for her/his favorite pair of exponents $(\alpha,\beta)$.

\medskip

In order to complete Step 3 for the exponents given in \eqref{eq:exp}, as in the $(12,6)$ classical case we have chosen to keep $N=40, \delta=10^{-2}, I=50, J=666$ since $\alpha\geq 6$. All our numerical findings are summarized in Table \ref{table1}. Thus, for these exponents $(\alpha,\beta)$ and all $(a,b)\in (0,\infty)^2$, the triangular lattice $\sqrt{V_{\mathsf{A}_2}}\mathsf{A}_2$ where
$$
V_{\mathsf{A}_2}=\left( \frac{a \alpha \zeta_{\mathsf{A}_2}(\alpha) }{b \beta \zeta_{\mathsf{A}_2}(\beta)}\right)^{\frac{2}{\alpha-\beta}}
$$
is the unique minimizer of $E_f$ in $\mathcal{L}_2$, up to rotation, i.e. the analogue of Theorem \ref{mainthm} is shown for these parameters. Furthermore, for the same exponents and for all $(a,b)\in (0,\infty)^2$, $\sqrt{V}\mathsf{A}_2$ is the unique minimizer of $E_f$ in $\mathcal{L}_2(V)$, up to rotation, if
$$
0<V<\left(\frac{a\alpha}{b\beta}  \right)^{\frac{2}{\alpha-\beta}},
$$
which shows Proposition \ref{prop:bound} for the exponents given in \eqref{eq:exp}.

\begin{table}
\centering
\begin{tabular}[c]{|c| c| c| c| c|}
\hline $\alpha$ & $\beta$ & $\overline{y_{\alpha,\beta}}$  &  
$M_{\alpha,\beta}$\\
\hline $14$ & 6 & 5.23   &  95 \\
\hline $16$ & 6 &  4.18  &   72\\
\hline $18$ & 6 & 3.60    &   50\\
\hline $20$ & 6 &  3.22   & 39\\
\hline $22$ & 6 &  2.97   &   34\\
\hline$24$ & 6 & 2.77    &  33\\
\hline
\end{tabular}
\caption{For the exponents $(\alpha,\beta)$ given in \eqref{eq:exp}: values of $\overline{y_{\alpha,\beta}}$ and $M_{\alpha,\beta}$ (for $N=40$ in the approximations).}
\label{table1}
\end{table}

\medskip

Let us now explain why Conjecture \ref{conj} should hold. Indeed, showing that $\mathcal{Q}_{\alpha,\beta}(L)>\frac{\alpha}{\beta}$ for all $2<\beta<\alpha$ and all $L\in \mathcal{L}_2(1)$ is equivalent with showing that
$$
s\mapsto \frac{\zeta_L(s)-\zeta_{\mathsf{A}_2}(s)}{s}
$$
is strictly increasing on $(2,+\infty)$ for all $L\in \mathcal{L}_2(1)$, i.e.
$$
\partial_s \left(\frac{\zeta_L(s)-\zeta_{\mathsf{A}_2}(s)}{s}\right)>0, \quad \forall L\in \mathcal{L}_2(1), \quad \forall s>2.
$$
Computing the above derivative, we find
\begin{align*}
&\partial_s \left(\frac{\zeta_L(s)-\zeta_{\mathsf{A}_2}(s)}{s}\right)=\frac{s(\partial_s \zeta_L(s)-\partial_s \zeta_{\mathsf{A}_2}(s))-\zeta_L(s)-\zeta_{\mathsf{A}_2}(s)}{s^2}>0 \quad \forall L\in \mathcal{L}_2(1), \quad \forall s>2\\
&\iff s\partial_s\zeta_L(s)-\zeta_L(s)>s\partial_s\zeta_{\mathsf{A}_2}(s)-\zeta_{\mathsf{A}_2}(s)\quad \forall L\in \mathcal{L}_2(1), \quad \forall s>2.
\end{align*}
Since 
$$
s\partial_s\zeta_L(s)-\zeta_L(s)=s\sideset{}{'}\sum_{p\in L}\partial_s\left( \frac{1}{|p|^s}\right)-\sideset{}{'}\sum_{p\in L} \frac{1}{|p|^s}=-\sideset{}{'}\sum_{p\in L} \frac{s\log |p| + 1}{|p|^s},
$$
Conjecture \ref{conj} implies that $\mathcal{Q}_{\alpha,\beta}(L)>\frac{\alpha}{\beta}$ for all $2<\beta<\alpha$ and all $L\in \mathcal{L}_2(1)$ and therefore implies the optimality of the triangular lattice $\sqrt{V_{\mathsf{A}_2}}\mathsf{A}_2$ for $E_f$ for all $2<\beta<\alpha$.

\section{Appendix: Proof of Lemma \ref{lem1}}\label{subsec:prooflemmas}

%\subsection{Restriction to the compact set $\mathcal{K}_{\alpha,\beta}$ - Proof of Lemma \ref{lem1}}\label{sec:Kab}

%In this section we show that, for any $\alpha>\beta>2$, we have that, for all $(x,y)\in \mathcal{D}$,
%$$
%y> y_{\alpha,\beta} \Rightarrow \mathcal{Q}_{\alpha,\beta}(x,y)>\frac{\alpha}{\beta},
%$$
%where
%$$
%y_{\alpha,\beta}:=
%\left\{
%\begin{array}{ll} 
%\displaystyle \frac{9}{2^{1+\frac{2}{\beta}}\zeta(2\beta)^{\frac{2}{\beta}}}\left( \zeta_{\Z^2}(\beta) + \sqrt{\zeta_{\Z^2}(\beta)^2 - \frac{2}{3^{\beta}}\zeta(2\beta)\left( 2\zeta_{\mathsf{A}_2}(\beta)-\zeta_{\mathsf{A}_2}(2\beta) \right)} \right)^{\frac{2}{\beta}} & \mbox{ if $\alpha=2\beta$,}\\ 
%\displaystyle\frac{3^{\frac{\alpha}{\alpha-\beta}}}{2^{1+\frac{2}{\alpha-\beta}}}\left( \frac{\alpha\zeta_{\Z^2}(\beta)}{\beta \zeta(\alpha)} \right)^{\frac{2}{\alpha-\beta}} & \mbox{otherwise}
%\end{array}
%\right.
%$$
%and where $\zeta(s):=\displaystyle \sum_{m\in \N} m^{-s}$ is the Riemann zeta function. 
Let us show how we obtain our threshold value $y_{\alpha,\beta}$.

\begin{proof}[Proof of Lemma \ref{lem1}]
We start the proof by showing that, for all $(m,n)\in \Z$ and all $(x,y)\in \mathcal{D}$,
$$
\frac{m^2+n^2}{2}\leq (m+xn)^2 + n^2 y^2\leq \frac{3}{2}m^2 + \left(\frac{3}{4}+y^2 \right)n^2,
$$
where we have combined Young's inequality with the fact that $x^2+y^2\geq 1$ and $x\in [0,1/2]$. It follows that, for all $L\in \mathcal{D}$,
\begin{align*}
\mathcal{Q}_{\alpha,\beta}(L)=\frac{\displaystyle y^{\frac{\alpha}{2}}\sideset{}{'} \sum_{m,n} \frac{1}{\left((m+xn)^2+ y^2 n^2 \right)^{\frac{\alpha}{2}}} - \zeta_{\mathsf{A}_2}(\alpha)}{\displaystyle y^{\frac{\beta}{2}}\sideset{}{'} \sum_{m,n} \frac{1}{\left((m+xn)^2+ y^2 n^2 \right)^{\frac{\beta}{2}}} - \zeta_{\mathsf{A}_2}(\beta)}&\geq \frac{\displaystyle y^{\frac{\alpha}{2}}\sideset{}{'} \sum_{m,n} \frac{1}{\left(\frac{3}{2} m^2 + \left(\frac{3}{4}+y^2 \right)n^2 \right)^{\frac{\alpha}{2}}} - \zeta_{\mathsf{A}_2}(\alpha)}{\displaystyle y^{\frac{\beta}{2}}\sideset{}{'} \sum_{m,n} \frac{2^{\frac{\beta}{2}}}{\left(m^2 + n^2 \right)^{\frac{\beta}{2}}} - \zeta_{\mathsf{A}_2}(\beta)}\\
&\geq\frac{\displaystyle \frac{2^{1+\frac{\alpha}{2}}y^{\frac{\alpha}{2}}}{3^{\frac{\alpha}{2}}}\zeta(\alpha) - \zeta_{\mathsf{A}_2}(\alpha)}{\displaystyle 2^{\frac{\beta}{2}} y^{\frac{\beta}{2}}\zeta_{\Z^2}(\beta) - \zeta_{\mathsf{A}_2}(\beta)},
\end{align*}
 where $2^{\frac{\beta}{2}} y^{\frac{\beta}{2}}\zeta_{\Z^2}(\beta) - \zeta_{\mathsf{A}_2}(\beta)>0$ is ensured by the fact that $\zeta_{\Z^2}(\beta)>\zeta_{\mathsf{A}_2}(\beta)$  for all $\beta>2$ by Theorem \ref{thm:mont}, and $(2y)^{\frac{\beta}{2}}\geq (\sqrt{3})^{\frac{\beta}{2}}>1$ since $y\geq \sqrt{3}/2$. We therefore have
$$
\frac{\displaystyle \frac{2^{1+\frac{\alpha}{2}}y^{\frac{\alpha}{2}}}{3^{\frac{\alpha}{2}}}\zeta(\alpha) - \zeta_{\mathsf{A}_2}(\alpha)}{\displaystyle 2^{\frac{\beta}{2}} y^{\frac{\beta}{2}}\zeta_{\Z^2}(\beta) - \zeta_{\mathsf{A}_2}(\beta)}>\frac{\alpha}{\beta}\iff \eta_1 y^{\frac{\alpha}{2}}-\eta_2 y^{\frac{\beta}{2}}+\eta_3>0,
$$
where $\{\eta_1,\eta_2,\eta_3\}$ are defined by
\begin{equation*}
\eta_1:= \frac{2^{1+\frac{\alpha}{2}}}{3^{\frac{\alpha}{2}}}\zeta(\alpha)>0,\quad \eta_2:=2^{\frac{\beta}{2}}\frac{\alpha}{\beta} \zeta_{\Z^2}(\beta)>0 \quad \textnormal{and}\quad \eta_3:=\frac{\alpha}{\beta}\zeta_{\mathsf{A}_2}(\beta)-\zeta_{\mathsf{A}_2}(\alpha).
\end{equation*}
Let us now show that $\alpha_3>0$ by showing that the function $f$ defined by
$$
f(s):=\frac{\zeta_{\mathsf{A}_2}(s)}{s}
$$
is decreasing on $(2,\infty)$. This will indeed imply that $\eta_3=\alpha(f(\beta)-f(\alpha))>0$ for all $\alpha>\beta>2$. Differentiating $f$ gives
$$
f'(s)=-\frac{\zeta_{\mathsf{A}_2}(s)}{s^2}-\frac{1}{s}\sideset{}{'} \sum_{p\in \mathsf{A}_2}\frac{\log |p|}{|p|^s}
$$
and noticing that $\log |p|\geq \log \sqrt{2/\sqrt{3}}>0$ for all $p\in \mathsf{A}_2\backslash \{0\}$ implies that $f'(s)<0$ for all $s>2$ and $f$ is therefore decreasing. In the general case, we remark that
$$
\eta_1 y^{\frac{\alpha}{2}}-\eta_2 y^{\frac{\beta}{2}}+\eta_3 >\eta_1 y^{\frac{\alpha}{2}}-\eta_2 y^{\frac{\beta}{2}}
$$
and $\eta_1 y^{\frac{\alpha}{2}}-\eta_2 y^{\frac{\beta}{2}}>0$ if and only if $y> y_{\alpha,\beta}:=\left( \frac{\eta_2}{\eta_1} \right)^{\frac{2}{\alpha-\beta}}$. Therefore, the case $\alpha\neq 2\beta$ is proved. Furthermore, if $\alpha=2\beta$, we can easily solve the inequality
$$
\eta_1 y^{\frac{\alpha}{2}}-\eta_2 y^{\frac{\beta}{2}}+\eta_3=\eta_1 X^2 - \eta_2 X +\eta_3>0,\quad \textnormal{where}\quad X=y^{\frac{\beta}{2}}.
$$
The discriminant of the polynomial $P(X)=\eta_1 X^2 - \eta_2 X + \eta_3$ is $\Delta=\eta_2^2-4\eta_1\eta_3>0$ and therefore
$$
X> \frac{\eta_2 + \sqrt{\eta_2^2-4\eta_1\eta_3}}{2\eta_1} \Rightarrow P(X)>0,
$$
which means that 
$$
y>y_{\alpha,\beta}:= \left(  \frac{\eta_2 + \sqrt{\eta_2^2-4\eta_1\eta_3}}{2\eta_1}\right)^{\frac{2}{\beta}} \Rightarrow \mathcal{Q}_{\alpha,\beta}(L)>\frac{\alpha}{\beta},
$$
and proof is completed.
\end{proof}

%x

\noindent \textbf{Acknowledgement:} I would like to thank the Austrian Science Fund (FWF) and the German Research Foundation (DFG) for their financial support through the joint project FR 4083/3-1/I 4354 during my stay in Vienna. I am also grateful to the anonymous referees for their helpful suggestions.

{\small \bibliographystyle{plain}
\bibliography{Biblio}}
\end{document}